\documentclass[12pt,reqno]{amsart}

\setlength{\textheight}{23cm} \setlength{\textwidth}{15cm}
\setlength{\topmargin}{-0.7cm}
\setlength{\parskip}{0.3\baselineskip} \hoffset=-0.8cm
\usepackage{latexsym}
\usepackage{amsmath}
\usepackage{amscd}
\usepackage{amssymb}
\usepackage{dsfont}

\usepackage{mathrsfs}
\usepackage{comment}
\usepackage{color}
\usepackage{enumerate}
\usepackage{soul}
\usepackage{framed}

\usepackage{graphicx}
\usepackage{subfigure}
\usepackage{url}

\newtheorem{theorem}{Theorem}[section]




\newtheorem{lemma}[theorem]{Lemma}

\newtheorem{remark}[theorem]{Remark}

\numberwithin{equation}{section}









\newcommand{\bfi}{\bfseries\itshape}

\tolerance=10000

\newcommand{\rem}[1]{}

\begin{document}

\title{ $G$-Strands on symmetric spaces}

\author[
A. Arnaudon, D. Holm and R Ivanov  ]
{Alexis Arnaudon$^{1}$, Darryl D. Holm$^{1}$ and Rossen I. Ivanov$^{2}$  }

\address{$^{1}$ Department of Mathematics, Imperial College, London SW7 2AZ, UK.\\
$^{2}$ School of Mathematical Sciences, Dublin Institute of Technology, Kevin Street, Dublin 8, Ireland.}

\begin{abstract}
    We study the $G$-strand equations that are extensions of the classical chiral model of particle physics in the particular setting of broken symmetries described by symmetric spaces. 
    These equations are simple field theory models whose configuration space is a Lie group, or in this case a symmetric space. 
    In this class of systems, we derive several models that are completely integrable on finite dimensional Lie group $G$ and we treat in more details examples with symmetric space $SU(2)/S^1$ and $SO(4)/SO(3)$.  
    The later model simplifies to an apparently new integrable $9$ dimensional system.  
    We also study the $G$-strands on the infinite dimensional group of diffeomorphisms, which gives, together with the Sobolev norm, systems of $1+2$ Camassa-Holm equations. 
    The solutions of these equations on the complementary space related to the Witt algebra decomposition are the odd function solutions. 
\end{abstract}

\maketitle

\section{Introduction}
We study a simplified field theoretical model called $G$-strands in which the fields take values in a Lie group $G$. 
The $G$-strands are related to the chiral model of particle physics, and they have been studied in the context of geometric mechanics in \cite{ Ho-Iv-Pe,Ho-Iv1,Ho-Iv2, FDT, HoLu2013}. 
In this paper, we are considering $G$-strands where the Lie group is replaced by a symmetric space.
The passage from a homogeneous space (that is, the quotient space of a Lie group by one of its subgroups) to a symmetric space requires an involutive automorphism which provides more structure for the equations.
This sort of geometrical structure appears in the theory of complex fluids, which is based on the concept of order parameters resulting from broken symmetries.
More precisely, the order parameter belongs to the coset space of the broken symmetry with respect to the remaining symmetry, which is a homogeneous space.

This definition of order parameter includes liquid crystals, He$^3$, He$^4$ and their various generalisations in condensed matter theory.
Here we consider models of strands on symmetric spaces for which the dynamics will be shown to be completely integrable. 
Some of these results are also compatible with well-known chiral models, see for example \cite{ZaMi}.

In our previous publications \cite{Ho-Iv-Pe,Ho-Iv1, Ho-Iv2,FDT, HoLu2013} we introduced and studied the $G$-strand construction, which gives rise to equations for a map $\mathbb{R}\times \mathbb{R}$ into a Lie group $G$ associated to a $G$-invariant Lagrangian. In the case of a semisimple Lie group $G$ with a Lie algebra $\frak{g}$, various classes of integrable equations have been found. In these cases the Lax operators take values in a loop algebra $\frak{g}_{\lambda}=\mathbb{C}[\lambda,\lambda^{-1}]\otimes \frak{g}$, and the resulting equations correspond to the classes of possible loop algebras $\frak{g}_{\lambda}$.
Other possibilities for the derivation of non-equivalent integrable systems use the concept of automorphic Lie algebras, which are subalgebras of $\frak{g}_{\lambda}$. Such subalgebras are obtained as a reduction with respect to an automorphism $\varphi$ of  $\frak{g}_{\lambda}$, i.e. they are the $\varphi$-invariant part of $\frak{g}_{\lambda}$, see \cite{MiLo}. The set of all automorphisms of $\frak{g}_{\lambda}$ forms the reduction group, as introduced in \cite{Mi}. Knowledge of the reduction group is important for the classification of the integrable equations as reductions from a given $\frak{g}_{\lambda}$. The reduction group naturally acts on all structures related to the Lax operator, including the scattering data. Both the continuous and discrete spectra of the Lax operator are orbits of the reduction group, see \cite{Mi}. Moreover, the automorphisms act naturally on the phase space and Hamiltonian structures, thereby introducing reductions to them as well.

Here we shall address a different but related reduction of an integrable system associated to $\frak{g}_{\lambda}$ which makes use of symmetric spaces. The structure of a symmetric space is determined by an involutive automorphism of the Lie algebra $\frak{g}$, known as Cartan involution, and the corresponding decomposition 
\begin{align*}
  \mathfrak g=\mathfrak  k\oplus\mathfrak p\, ,
\end{align*}
  where $ \mathfrak  k$ is a subalgebra, invariant under the Cartan involution, and $\mathfrak p$ is a complementary subspace on which the Cartan involution has an eigenvalue $-1$. 
The classification of the symmetric spaces of the simple Lie groups is provided in the classic monograph \cite{h}. 
Due to the Lie-algebraic nature of this splitting, the Hamiltonian variables of these equations separate into sets taking values in either $\mathfrak  k$ or $\mathfrak p$. 
Moreover, by restricting the Hamiltonian to depend only on the variables in the space $\mathfrak p$, the reduced equations can be written on the symmetric space.  
We will explain how this construction is related to the concept of un-reduction of \cite{BrGBHoRa2011} which was extended to covariant field theories in \cite{ACH2015}.
Integrable systems on symmetric spaces of finite dimensional Lie algebras have been well studied in the literature, see \cite{AtFo, KuFo, Fo, GeGr,GGK}.  
A more general but similar construction would be on homogeneous spaces, as studied recently in \cite{Vi1, Vi2} also in the context of reduction by symmetry.
This construction is different from the semidirect product $G$-strands explored in \cite{Ho-Iv2} for the special Euclidean group $SE(3):= SO(3) \circledS \mathbb R^3$ with Lie algebra commutation relations  
\begin{align*}
        [\mathfrak{so}(3), \mathfrak{so}(3)] \subset \mathfrak{so}(3)\quad  \mathrm{and}  \quad [\mathfrak{so}(3),\mathbb R^3] \subset  \mathbb R^3\, ,
\end{align*}
because for symmetric spaces, an additional commutation relation occurs involving the complementary space $\mathfrak{p}$ 
\begin{align*}
        [\mathfrak{k}, \mathfrak{k}] \subset \mathfrak{k}\, , \qquad [\mathfrak{k},\mathfrak{p} ] \subset  \mathfrak{p}\, , \qquad [\mathfrak{p},\mathfrak{p}] \subset \mathfrak{k}\, .
\end{align*}

{\bf Plan of the work.} In Section \ref{theory-review} we give a brief account of the $G$-strand construction, reviewing our previous works and illustrating the $G$-strand construction with several simple but instructive examples. 
In Section \ref{symmetric-spaces} we construct the theory of integrable $G$-strands on symmetric spaces for semisimple Lie algebras $\frak{g}$, and give examples for $SU(2)$ and $SO(4)$. 
The idea of the restriction of the phase space of a Hamiltonian system on a symmetric space may also be useful for non-integrable systems. For example, non-integrable $G$-strands arise when $G$ is the infinite-dimensional group of diffeomorphisms. $\mathrm{Diff}(\mathbb{R})$-strand equations on symmetric spaces and their singular solutions are presented in Section \ref{diff-strand}. 

\section{The $G$-strand equations}\label{theory-review}

We describe here the construction of the $G$-strand equations based on the theory of reduction by symmetry for covariant field theory in $1+1$ dimensions. 
More general systems which encompass the $G$-strand equations can be found, for example, in \cite{CaRa}.

Let $G$ be a Lie group and consider the map $g(t,s): \mathbb{R}\times
\mathbb{R}\to G$. 
This map has two tangent vectors associated to the independent variables $s$ and $t$. 
We will denote them as $\dot{g} :=\frac{\partial g}{\partial t}\in T_g G$ and $g' :=\frac{\partial g}{\partial s} \in T_g G$, respectively. 
Although, as we will see later,  the dynamical equations will be symmetric under $s\leftrightarrow t$,  the time derivative can still be interpreted as the velocity, and the space derivative as a deformation gradient. 

We will now implement the theory of reduction by symmetry, which requires the Lagrangian
density function $ L(g,\dot{g},g') $ to be left $G$--invariant and thereby yields a reduced Lagrangian $l: \mathfrak{g}\times \mathfrak{g} \rightarrow \mathbb{R}$, defined by
\begin{align*}
  l(g^{-1}\dot{g} , g^{-1} g' ):= L(g^{-1}g,g^{-1}\dot{g},g^{-1}g')= L(g,\dot{g},g')\, .
\end{align*}
Conversely, this relation defines for any reduced Lagrangian
$l=l({\sf u},{\sf v}) : \mathfrak{g}\times \mathfrak{g}
\rightarrow \mathbb{R} $ a left $G$-invariant function $ L : T
G\times TG \rightarrow \mathbb{R} $ and a map $g(t,s):
\mathbb{R}\times \mathbb{R}\to G$ such that
\begin{align}
    {\sf u} (t,s) := g^{ -1} g_t (t,s) =:g^{ -1}\dot{g}(t,s)
    \quad\hbox{and}\quad {\sf v} (t,s) := g^{ -1} g_s (t,s)=: g^{ -1}g' (t,s)\, .
    \label{reduced-vectors}
\end{align}

\begin{theorem} [Covariant Euler-Poincar\'e theorem]\label{lall}
With the preceding notation, the following two statements are equivalent:
\begin{enumerate}
    \item  The variational principle 
    \begin{align*}
        \delta \int _{t_1} ^{t_2} L(g(t,s), \dot{g} (t,s), g'(t,s) ) \,ds\,dt = 0 
    \end{align*}
    holds on $T G\times TG$, for variations $\delta g(t,s)$ of $ g (t,s)$ that vanish at the endpoints in $t$ and $s$. The function $g(t,s)$
    satisfies the Euler--Lagrange equation for $L$ on $G$, given by
    \begin{align} \label{EL-eqns}
        \frac{\partial L}{\partial g} - \frac{\partial}{\partial t}\frac{\partial L}{\partial g_t} - \frac{\partial}{\partial s}\frac{\partial L}{\partial g_s} = 0\, .
    \end{align}
    \item  The constrained variational principle
    \begin{align} \label{variationalprinciple}
        \delta \int _{t_1} ^{t_2}  l({\sf u}(t,s), {\sf v}(t,s)) \,ds\,dt = 0
    \end{align}
    holds on $\mathfrak{g}\times\mathfrak{g}$, using variations of ${\sf u} := g^{ -1} g_t (t,s)$ and ${\sf v}:= g^{ -1} g_s(t,s) $ of the forms
    \begin{align} \label{epvariations}
        \delta {\sf u} = \dot{{\sf w} } + {\rm ad}_{\sf u}{\sf w} \quad\hbox{and}\quad \delta {\sf v} = {\sf w}\,' + {\rm ad}_{\sf v} {\sf w} \, ,
    \end{align}
    where ${\sf w}(t,s) :=g^{ -1}\delta g \in \mathfrak{g}$ vanishes
    at the endpoints. The {\bfi Euler--Poincar\'{e}} equation holds on
    $\mathfrak{g}^*\times\mathfrak{g}^*$
    \begin{align}
    \frac{d}{dt} \frac{\delta l}{\delta {\sf u}} -
     \operatorname{ad}_{{\sf u}}^{\ast} \frac{ \delta l }{ \delta {\sf u}}
    + \frac{d}{ds} \frac{\delta l}{\delta {\sf v}} -
     \operatorname{ad}_{{\sf v}}^{\ast} \frac{ \delta l }{ \delta {\sf v}} = 0\, ,  
    \label{EPeqns}
    \end{align}
where ${\rm ad}^*: \mathfrak{g}\times\mathfrak{g}^*\to\mathfrak{g}^*$ is defined via
${\rm ad}:\mathfrak{g}\times\mathfrak{g}\to\mathfrak{g}$ with the pairing of the Lie algebra $\langle \,\cdot\,,\,\cdot\,\rangle: \mathfrak{g}^*\times\mathfrak{g}\to\mathbb{R}$.
\end{enumerate}
\end{theorem}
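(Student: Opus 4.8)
The plan is to prove the equivalence of the two variational principles by working through each direction separately, with the middle link being the identification of the reduced variations \eqref{epvariations}.

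First I would establish the variational calculus on the group. Statement (1) is classical: the first variation of $\int L(g,\dot g, g')\,ds\,dt$ with respect to $\delta g$ vanishing at the endpoints in both $t$ and $s$ gives, after two integrations by parts (one in $t$, one in $s$, each producing a boundary term that vanishes by hypothesis), exactly the Euler--Lagrange equation \eqref{EL-eqns}. This is the standard fact that stationarity of the action is equivalent to the Euler--Lagrange equations, now in a $1+1$ field-theoretic setting; no group structure is needed here.

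Next I would derive the constrained variations. Setting ${\sf w} := g^{-1}\delta g$, ${\sf u} := g^{-1}\dot g$, ${\sf v} := g^{-1}g'$, I compute $\delta {\sf u}$ and $\delta {\sf v}$ by differentiating these relations. The key computation is that $\delta(g^{-1}\dot g) = \partial_t(g^{-1}\delta g) + [g^{-1}\dot g,\, g^{-1}\delta g]$, which follows from the product rule, the identity $\delta(g^{-1}) = -g^{-1}(\delta g)g^{-1}$, the corresponding identity for $\partial_t(g^{-1})$, and the equality of mixed partials $\partial_t \delta g = \delta \dot g$ (since $\delta$ and $\partial_t$ commute as independent operations). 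This yields $\delta {\sf u} = \dot{{\sf w}} + \mathrm{ad}_{\sf u}{\sf w}$, and the identical argument with $\partial_s$ in place of $\partial_t$ gives $\delta {\sf v} = {\sf w}' + \mathrm{ad}_{\sf v}{\sf w}$, which are precisely \eqref{epvariations}. Conversely, given any curve ${\sf w}(t,s)$ vanishing at the endpoints, one can integrate to recover an admissible $\delta g = g{\sf w}$, so the class of variations \eqref{epvariations} is exactly the image of the admissible $\delta g$; this bijection is what makes the two variational principles equivalent rather than merely one implying the other.

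Finally I would reduce the action integral: by left-invariance, $L(g,\dot g,g') = l({\sf u},{\sf v})$, so the two integrals in (1) and (2) are literally equal, hence one is stationary iff the other is. It then remains to show that stationarity of $\int l({\sf u},{\sf v})\,ds\,dt$ under the constrained variations \eqref{epvariations} is equivalent to \eqref{EPeqns}. Here I compute
\[
\delta \int l\,ds\,dt = \int \left\langle \frac{\delta l}{\delta {\sf u}},\, \dot{\sf w} + \mathrm{ad}_{\sf u}{\sf w}\right\rangle + \left\langle \frac{\delta l}{\delta {\sf v}},\, {\sf w}' + \mathrm{ad}_{\sf v}{\sf w}\right\rangle \,ds\,dt,
\]
then integrate by parts in $t$ on the $\dot{\sf w}$ term and in $s$ on the ${\sf w}'$ term (boundary terms vanish since ${\sf w}$ vanishes at the endpoints), and use the definition of $\mathrm{ad}^*$ via $\langle \mu, \mathrm{ad}_{\sf u}{\sf w}\rangle = \langle \mathrm{ad}_{\sf u}^*\mu,\, {\sf w}\rangle$ to collect everything against ${\sf w}$. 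Since ${\sf w}$ is otherwise arbitrary on the interior, the fundamental lemma of the calculus of variations forces the bracketed coefficient to vanish, which is exactly \eqref{EPeqns}. The main obstacle, and the only point requiring care, is the derivation of the constrained variation formula \eqref{epvariations} and the verification that these variations exhaust the admissible ones — that is, that one can reconstruct $\delta g$ from an arbitrary ${\sf w}$ vanishing at the endpoints — since everything else is a routine integration by parts; but this reconstruction is straightforward because $\delta g = g{\sf w}$ automatically inherits the endpoint conditions from ${\sf w}$.
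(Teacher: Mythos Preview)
Your proof is correct and follows exactly the classical argument that the paper has in mind; indeed, the paper does not reproduce a proof at all but simply states that ``the proof of this theorem is classical and can be found in previous works, \cite{MaRa1999,CaRa}.'' Your write-up supplies precisely that standard argument --- deriving the constrained variations \eqref{epvariations} from $\delta g = g{\sf w}$, identifying the two action integrals via left-invariance, and integrating by parts against ${\sf w}$ --- so there is nothing to compare: you have written out what the paper elects to omit. One cosmetic remark: where you say ``integrate to recover an admissible $\delta g = g{\sf w}$,'' no integration is actually needed, since the infinitesimal variation $\delta g$ is defined pointwise by this formula; the reconstruction of a genuine one-parameter family $g_\epsilon$ (e.g.\ via the exponential map) is implicit but not required for the first-order variational argument.
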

The proof of this theorem is classical and can be found in previous works, \cite{MaRa1999,CaRa}.
The reduced fields ${\sf u}$ and ${\sf v}$ have an additional property in the context of covariant field theory which is given in the following lemma. 
\begin{lemma}
The left-invariant tangent vectors ${\sf u} (t,s)$ and ${\sf v} (t,s)$ at the identity of $G$ satisfy {\bfi zero-curvature relation}
\begin{align}
    {\sf v}_t - {\sf u}_s = -\,{\rm ad}_{\sf u}{\sf v} \,.
    \label{zero-curv1}
\end{align}
\end{lemma}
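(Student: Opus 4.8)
The plan is to differentiate the defining relations \eqref{reduced-vectors} and exploit the equality of mixed partial derivatives $g_{ts}=g_{st}$, exactly as one derives the Maurer--Cartan structure equation. First I would record the identity $\partial_s(g^{-1}) = -\,g^{-1} g_s\, g^{-1}$, obtained by differentiating $g^{-1}g = e$ with respect to $s$ (and the analogous identity in $t$). Applying this to ${\sf u}=g^{-1}g_t$ gives
\begin{align*}
  {\sf u}_s = \partial_s(g^{-1}g_t) = -\,g^{-1}g_s\,g^{-1}g_t + g^{-1}g_{ts}\, ,
\end{align*}
and symmetrically ${\sf v}_t = -\,g^{-1}g_t\,g^{-1}g_s + g^{-1}g_{st}$.

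Next I would subtract these two expressions. Since $g(t,s)$ is smooth, $g_{ts}=g_{st}$, so the second-derivative terms cancel and what remains is
\begin{align*}
  {\sf u}_s - {\sf v}_t = -\,g^{-1}g_s\,g^{-1}g_t + g^{-1}g_t\,g^{-1}g_s = {\sf u}\,{\sf v} - {\sf v}\,{\sf u} = [{\sf u},{\sf v}] = {\rm ad}_{\sf u}{\sf v}\, ,
\end{align*}
which is precisely \eqref{zero-curv1} after rearrangement. For a matrix Lie group this is the whole argument; for an abstract Lie group the same computation is carried out either by pulling the commuting coordinate vector fields $\partial_t,\partial_s$ back by left translation and invoking the structure equation $d\theta + \tfrac12[\theta,\theta]=0$ for the left Maurer--Cartan form $\theta$, or equivalently by noting that \eqref{zero-curv1} is the same identity that produces the constrained variations \eqref{epvariations}: the formula $\delta{\sf u} = \dot{{\sf w}} + {\rm ad}_{\sf u}{\sf w}$ expresses the commutator of the reduction map with the derivative along one parameter, and replacing the variational direction $\delta$ by $\partial_s$ (so that ${\sf w}=g^{-1}\delta g$ is replaced by ${\sf v}=g^{-1}g_s$) turns it into ${\sf u}_s = {\sf v}_t + {\rm ad}_{\sf u}{\sf v}$.

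There is no serious obstacle here: the only point requiring a little care is the passage from matrix groups to a general Lie group, where one must phrase $\partial_s(g^{-1})$ and the bracket intrinsically rather than through matrix multiplication. Once that bookkeeping is in place, the cancellation of the mixed partial derivatives does all the work, and the sign in \eqref{zero-curv1} is fixed by the rearrangement above.
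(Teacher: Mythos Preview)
Your argument is correct and is precisely the approach the paper has in mind: the paper's proof consists of the single sentence that the result ``follows from equality of cross derivatives $g_{ts}=g_{st}$,'' and you have simply written out that computation in full. Nothing needs to be added or changed.
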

\begin{proof}
    The proof is standard and follows from equality of cross derivatives $g_{ts}=g_{st}$.
\end{proof}
This solution of the Euler-Poincar\'e equation must satisfy this additional relation \eqref{zero-curv1} for it to correspond to a solution on the Lie group, reconstructed via the definition of the reduced vectors \eqref{reduced-vectors}. 
As we will work with the reduced variables and also need the reconstruction to hold, we will always impose this relation and solve the coupled system of Euler-Poincar\'e equation \eqref{EPeqns} and ZCR \eqref{zero-curv1} together. 
We will call these equations the {\bfi $G$-strand equations}. 

\begin{remark}[Zero curvature relation]
The term zero curvature relation is not related to the curvature of the space where the fields $\sf u$ and $\sf v$ take values. Rather, it refers  to the curvature of another geometrical construction which will briefly describe here. 
The fields $\sf u$  and $\sf v$ are in fact maps ${\sf u}:\mathbb R^2\to \mathfrak g$. These maps combine into a single one-form $\nu := {\sf u}dt + {\sf v} ds$, which is a section of the bundle $\mathfrak g\otimes \mathbb R^2\to \mathbb R^2$, mapping from the space-time $(t,s)\in \mathbb R^2$ to one-forms on $\mathbb R^2$ with values in $\mathfrak g$. 
The zero curvature relation is the condition that the curvature of this structure vanishes on the solution of the Euler-Poincar\'e equation. 
The curvature is expressed in term of the covariance exterior derivative which is here $\mathrm{Curv}(\nu) := d^\nu\nu = d\nu + [\nu,\nu]$. 
\end{remark}

\begin{remark}[Historical remark] 
In 1901, Poincar\'e \cite{Po1901} proved that when a Lie
algebra acts locally transitively on the configuration space of a
Lagrangian mechanical system, the well known Euler-Lagrange
equations are equivalent to a new system of differential equations
defined on the product of the configuration space with the Lie
algebra. These equations are now called Euler-Poincar\'e equations in his honour. In modern language the content of the
Poincar\'e's article \cite{Po1901} is presented for example in
\cite{Ho2011GM2,Marle}. An English translation of the article
\cite{Po1901} can be found in Appendix D of \cite{Ho2011GM2}.
\end{remark}

\subsection{Lie-Poisson Hamiltonian formulation}

In this section, we derive the corresponding $G$-strand equations in the Hamiltonian framework, by applying the Legendre transformation to the Lagrangian $\ell({{{\sf u}},{{\sf
v}}}):\, \mathfrak{g}\times \mathfrak{g}\to\mathbb{R}$ with respect to the first variable only. 
This yields the Hamiltonian $h({{{\sf m}},{{\sf v}}}):\, \mathfrak{g}^*\times 
\mathfrak{g}\to\mathbb{R}$, given by 
\begin{align}
    h({\sf m},{\sf v}) = \langle{\sf m}\,,\,{\sf u}\rangle - \ell({\sf u},{\sf v}) \, .
    \label{leglagham} 
\end{align}
The variational derivatives of the Hamiltonian gives the useful relations 
\begin{align*}
    \frac{\delta l}{\delta {\sf u}} = {{\sf m}} \,,\quad
    \frac{\delta h}{\delta {\sf m}} = {{\sf u}} \quad\hbox{and}\quad
    \frac{\delta h}{\delta {\sf v}} = -\,\frac{\delta \ell}{\delta {\sf v}} = -{\sf n}\, .
\end{align*}
The corresponding non-canonical Hamiltonian equation, or Lie-Poisson equations are directly found to be 
\begin{align} \label{hameqns-so3}
\begin{split}
    {\partial_t} {{\sf m}} &= {\rm ad}^*_{\delta h/\delta {{\sf m}}}\, {{\sf m}} + \partial_{s} \frac{\delta h}{\delta {{\sf v}}} - {\rm ad}^*_{{\sf v}}\,\frac{\delta h}{\delta {{\sf v}}} \, ,\\
\partial_t {{\sf v}}
    &= \partial_{s}\frac{\delta h}{\delta {{\sf m}}} -  {\rm ad}_{\delta h/\delta {{\sf m}}}\,{{\sf v}} \,.
\end{split}
\end{align}
Assembling these equations into Lie-Poisson Hamiltonian structure 
gives
\begin{align} \label{LP-Ham-struct-symbol1}
\frac{\partial}{\partial t}
    \begin{bmatrix}
    {{\sf m}}
    \\
    {{\sf v}}
    \end{bmatrix}
=
\begin{bmatrix}
  {\rm ad}^*_\square {{\sf m}}
   &\hspace{5mm}
  \partial_s - {\rm ad}^*_{{\sf v}}
   \\
   \partial_s + {\rm ad}_{{\sf v}}
   &\hspace{5mm} 0
    \end{bmatrix}
    \begin{bmatrix}
            \delta h/\delta{{\sf m}}={\sf u} \\
            \delta h/\delta{{\sf v}}=-{\sf n}
    \end{bmatrix}\, ,
\end{align}
where we use the notation $(\mathrm{ad}^*_\square m) u = \mathrm{ad}^*_u m$. 

The Hamiltonian matrix in equation \eqref{LP-Ham-struct-symbol1}
also appears in the Lie-Poisson brackets for Yang-Mills plasmas,
for spin glasses and for perfect complex fluids, such as liquid
crystals, see for example \cite{HoKu1983,Ho2002,FGBRa2009}.

\begin{remark}[On the loss of covariance]
    The choice of taking the Legendre transform only with respect to the velocity ${\sf u}$, destroys the symmetry or covariance of the $t$ and $s$ variables in the Lie-Poisson equation. 
There exists an intrinsic way to apply the Legendre transformation that preserves this symmetry, developed in \cite{CaMa2003}, but we will not use this method here, as we want to relate the $G$-strand equations to classical Hamiltonian systems such as $\sigma$-models.
\end{remark}

\subsection{$G$-strand equations on semisimple Lie algebras}

Denoting ${\sf m}:=\delta \ell/\delta{\sf u}\in \mathfrak g^*$ and ${\sf n}:=\delta \ell/\delta{\sf v}\in \mathfrak{g}^*$, the $G$-strand equations are
\begin{align}
    \begin{split}
    {\sf m}_t + {\sf n}_{s} - {\rm ad}^*_{\sf u}{\sf m}
     - {\rm ad}^*_{\sf v}{\sf n} &=0\, ,\\
    \partial_t{\sf v} -\partial_{s}{\sf u} + {\rm ad}_{\sf u}{\sf v} &=0\, .
    \end{split}
\end{align}
For a semisimple \emph{matrix Lie group} $G$ and its \emph{semisimple Lie algebra} $\mathfrak{g}$, one has ${\rm ad}^*=-\,{\rm ad}$ and these equations take the commutator form, 
\begin{align} \label{MatrAlgEq} 
        \begin{split} {\sf m}_t + {\sf n}_{s} + {\rm ad}_{\sf u}{\sf m}
 + {\rm ad}_{\sf v}{\sf n} =&0\, , \\
\partial_t{\sf v} -\partial_{s}{\sf u}  + {\rm ad}_{\sf u}{\sf v}=& 0\, ,
\end{split}
\end{align}
where we have used the ad-invariant pairing of semisimple matrix Lie algebras which is given by the Killing form
\begin{align}
        \langle{\sf m}\,,\,{{{\sf n}}}\rangle=\mathrm{Tr}(\mathrm{ad}_{\sf m} \mathrm{ad}_ {\sf n}) = \epsilon\, \mathrm{Tr}({\sf m}{\sf n})\, , 
\end{align}
where $\epsilon$ is a negative constant which depends on the Lie algebra.
For example we have $\epsilon = -1/ 2$ for $\mathfrak{so}(3)$. 
This pairing is non-degenerate if the Lie algebra is semi-simple and thus allows us to identify $\mathfrak g\cong \mathfrak g^*$ and the adjoint operator which is the matrix commutator and is identified with minus the coadjoint operator.
Accordingly, the Hamiltonian structure reduces to 
\begin{align} \label{LP-Ham-struct-semisimple}
\frac{\partial}{\partial t}
    \begin{bmatrix}
    {{\sf m}}
    \\
    {{\sf v}}
    \end{bmatrix}
=
\begin{bmatrix}
  {\rm ad}_{\sf m}
   &\hspace{5mm}
  \partial_s + {\rm ad}_{{\sf v}}
   \\
   \partial_s + {\rm ad}_{{\sf v}}
   &\hspace{5mm} 0
    \end{bmatrix}
    \begin{bmatrix}
            \delta h/\delta{{\sf m}}={\sf u} \\
            \delta h/\delta{{\sf v}}=-{\sf n}
    \end{bmatrix}\, .
\end{align}
Examples of these systems for various Lie groups are studied in \cite{Ho-Iv-Pe, Ho-Iv2, FDT}.

\subsection{Example: the chiral model}

For the fields ${\sf m}$ and ${\sf v}$ with values in a Lie algebra $\mathfrak g$, we choose the quadratic Hamiltonian density 
\begin{align}
    h({\sf m},{\sf v}) = \frac{1}{2} \langle{\sf m}\,,\,{\sf m}\rangle +\frac12  \langle{\sf v}\,,\,{\sf v}\rangle \,.
\label{hsu2} 
\end{align} 
We thus have ${\sf u}={\sf m}$ and ${\sf v}={-\sf n}$ and the Hamiltonian equations corresponds to the well known chiral model for $\mathfrak{su}(n)$
\begin{align} \label{suNchiral} 
\begin{split} 
        {\sf v}_{s} -{\sf u}_t  &=0\, ,\\
    {\sf v}_t -{\sf u}_s  + [{\sf u},{\sf v}]&= 0\, ,
\end{split}
\end{align}
see for example \cite{ZaMi, FaTa, ZMNP, GVY} and references therein.
This is an integrable model with a Lax representation 
\begin{align}
    L_t-M_s+[L,M]=0\, ,
\end{align}
where 
\begin{align}
    L&=\frac{1}{4}\left(- 2{\sf v}  +\lambda({\sf u} - {\sf v})-\frac{1}{\lambda}({\sf u}+{\sf v})\right)   \\
    M&=-\frac{1}{4}\left( 2{\sf u}  +\lambda({\sf u} - {\sf v})+\frac{1}{\lambda}({\sf u}+{\sf v})\right)\, ,
\label{LM-chiral}
\end{align}
for an arbitrary complex spectral parameter $\lambda$. 
This equivalent formulation of the equation \eqref{suNchiral} makes this model integrable by the means of the inverse scattering method.
In addition, the chiral model equations \eqref{suNchiral} can be rewritten in a more familiar form, first derived in \cite{ZaMi}, by doing the following.
First apply the change of space time-variables  $t= \frac12(x-y)$ and $s= \frac12 (x+y)$ as well as new fields $\xi= u-v$ and $\eta= u+v$. 
The chiral model \eqref{suNchiral} transforms accordingly to 
\begin{align}
    \begin{split}
    \partial_x \xi  -\frac12[\xi,\eta]&= 0 \\
    \partial_y\eta + \frac12 [\xi,\eta]&= 0\, .
    \end{split}
\end{align}
The Lax representation \eqref{LM-chiral} is also modified to
\begin{align}
        L&= \frac{\xi}{\lambda-1}\quad \mathrm{and} \quad M= \frac{\eta}{\lambda+1}\, .
\end{align}
The choice of $SU(2)$ is of particular interest because after a change of variables shown in \cite{ZMNP} the equations result in a generalization of the sine-Gordon equation.  
The solutions of the chiral model on $SU(n)$ and $SO(n)$ are also discussed in \cite{ZMNP} as well as the cases of $U(n)$ and $SL(n)$ in \cite{Harnad1984,FaTa}.
For the more general case of $GL(n)$ we refer to \cite{Beggs1990}.

\section{Symmetric spaces for semisimple Lie algebras}\label{symmetric-spaces}

We will now formulate the $G$-strand equations on symmetric spaces.
To begin, we recall the definition of a symmetric space and refer the interested reader to the monographs \cite{h,arvanitogeorgos2003introduction} for more details. 

A homogeneous space is a manifold $\mathcal{M}$ on which a Lie group $G$ acts transitively. 
As a consequence, $\mathcal{M}$ is diffeomorphic to the coset space $G/K$, where $K$ is a (closed) Lie subgroup of $G$. 
Furthermore, in an important special case, the homogeneous space is reductive and its tangent space at the identity can be identified with a subspace $\mathfrak{p}$ of the Lie algebra $\mathfrak{g}$ of $G$. 
A large class of homogeneous spaces have the special geometrical properties which makes them symmetric spaces. 
This is the case when $K\subset G$ is also a subgroup of 
\begin{align}
        G^\varphi= \{ g\in G| \varphi(g)= g\}\, ,
\end{align}
where $\varphi:G\to G$ is an involution, i.e $\varphi^2(g)= g$. 
The involution $\varphi$ has an induced action $\tilde{\varphi}$ on $\mathfrak{g}$ and  
\begin{align}
        \mathfrak{k}=\{ X\in \mathfrak{g} , \tilde{\varphi}(X)=X\} \qquad \mathfrak{p}=\{ X\in \mathfrak{g} , \tilde{\varphi}(X)=-X\} \, , 
\end{align}
with
\begin{align}
        \mathfrak{g} = \mathfrak{k}\oplus \mathfrak{p}\, ,
\end{align}
where $\mathfrak{k}$ is a subalgebra, invariant under the Cartan involution and corresponding to the eigenvalue $+1$ f $\varphi$, and $\mathfrak{p}$ is a complimentary subspace on which the Cartan involution has an eigenvalue $-1$. 
The orthogonality between $\mathfrak{k}$ and $\mathfrak{p}$ is with respect to the Killing form of $\mathfrak{g}$ and the subspace $\mathfrak{k}$ is $Ad(K)$-invariant. 
Moreover, the following relations are fulfilled
\begin{align}
        [\mathfrak{k}, \mathfrak{k}] \subset \mathfrak{k}\, , \qquad [\mathfrak{k},\mathfrak{p} ] \subset  \mathfrak{p}\, , \qquad [\mathfrak{p},\mathfrak{p}] \subset \mathfrak{k}\, .
    \label{sym-relation}
\end{align}
The first relation means that $\mathfrak k$ is a Lie subalgebra, the second that $\mathfrak p$ is invariant under the action of $\mathfrak k$ and the third is a characteristic of symmetric spaces which distinguishes them from the semidirect product systems. 
We refer to \cite{h} for a complete classification of symmetric spaces for matrix Lie groups. 

\subsection{Reduction and un-reduction}\label{un-reduction}

Before going into the derivation of the $G$-strand equations on symmetric spaces, we want to highlight the underlying geometry associated with group reduction in the symmetric space construction. 
In this construction, we will select a Lagrangian, or Hamiltonian that is invariant with respect to the action of the full group $G$, but we ultimately want to have a system written on $T_eP= T_e(G/K)$ and not on $\mathfrak g$.  
For this, there is an interesting general construction based on the reduction by symmetry which can be applied directly here. 
For simplicity, we will only consider the classical mechanical setting, namely with no space $s$ variable. 
This construction was used in \cite{bruveris2011reduction} in the context of image matching and extended to field theories in \cite{ACH2015}.
The idea is to combine the un-reduction scheme of \cite{bruveris2011reduction} with the usual reduction by symmetry to obtain a dynamical equation on a Lie algebra rather than on the tangent bundle of a symmetric space. 
This can be achieved only for a particular class of Lagrangians which are symmetric with respect to both groups involved in the construction of the symmetric space and which do not depend on the complementary subspace in the Cartan decomposition.  
Specifically, we apply the following procedure of un-reduction and reduction. 
\begin{enumerate}
        \item This scheme works when the original system is described by a Lagrangian defined on the tangent space of a symmetric space $P:=G/K$; namely, $$\mathcal L: TP\to \mathbb R\, .$$ 
        We assume that this Lagrangian is invariant under the Lie groups $G$ and $K$. 
        For such a system, we cannot apply any reduction by symmetry for this Lagrangian, as $TP$ is not the tangent space of a Lie group.  
    \item To overcome this difficulty we append to $TP$ the so-called adjoint bundle $\tilde {\mathfrak{k}}:= (G\times \mathfrak k)/ K$, where the quotient is taken with respect to the group action of $K$ on $G$ and the adjoint action of $K$ on $\mathfrak k$. 
        The Lagrangian $\mathcal L$ can then be trivially extended to a Lagrangian on this space which does not explicitly depend on the variable in the adjoint bundle $\tilde{\mathfrak k}$. 
        We thus have an equivalent system described by the Lagrangian $$\overline L: TP\oplus \widetilde{\mathfrak k}\to \mathbb R\, .$$
    \item This extension of the original phase space $TP$ allows us to use a more general theory of reduction by symmetry, called Lagrange-Poincar\'e reduction theory \cite{cendra2001lagrangian}. 
        Presenting this theory in detail is out of the scope of this work; so we will just explain its main ideas. 
        First, this theory can be applied to general Lagrangian systems invariant under a Lie group whose dimension is smaller than the dimension of the configuration manifold, which is the manifold $M$ if the Lagrangian is written on $TM$. 
        Here, the Lie group is $K$ and it acts on a larger configuration manifold (which happens to be a Lie group) $G$.
        Second, the most important tool in this theory is the isomorphism $\alpha: T(G/K) \to TP\oplus \widetilde{\mathfrak k}$. This isomorphism is used to define another Lagrangian $l$ on $T(G/K)$ which is equivalent to $\overline L$, that is  $$l:(TG)/K\to \mathbb R\, . $$
        \item This step is the last one in the un-reduction procedure which uses the Lagrange-Poincar\'e reduction in the reverse direction to obtain the equivalent system  on $TG$, with corresponding Lagrangian $$L: TG\to \mathbb R\, .$$ 
        This step is described in detail in \cite{cendra2001lagrangian,bruveris2011reduction,ACH2015}. 
        \item The Lagrangian $L$ in the previous step is still equivalent to the original Lagrangian $\mathcal L$ which was invariant under the action of $G$. 
        We can thus use the standard Euler-Poincar\'e reduction theory to reduce this last system with Lagrangian $L$ to a Euler-Poincar\'e system with reduced Lagrangian  $$\ell:\mathfrak g\to \mathbb R\, .$$
\end{enumerate}

The crucial property of this last Lagrangian $\ell$ is that it will not depend on the $\mathfrak k$, as none of the previous Lagrangians did, but the equation of motion will involve $\mathfrak k$, as the Euler-Poincar\'e reduction is done using the full group $G$. 
We will use this fact later, starting with a Lagrangian whose resulting dynamical system can be written on a symmetric space. 

\subsection{$G$-strand equations}

In this section, we will derive the $G$-strand equations. We start from the complete Lie algebra, then apply the symmetric space definition to show that the equations reduce as expected from the previous theoretical considerations. 
We first split the variable $\sf m$ and the Hamiltonian according to \eqref{sym-relation} as
\begin{align}
    {\sf m}&=({\sf m}_-,{\sf m}_+) \in (\mathfrak  k,\mathfrak  p) \,,\\
     \frac{\delta H}{\delta {\sf m}}&=\left ( \frac{\delta H}{\delta {\sf m}_-},
     \frac{\delta H}{\delta {\sf m}_+}\right )
      \in (\mathfrak k,\mathfrak p)\, . 
\label{decomp} 
\end{align}
Note that ${\sf m} _{\pm}$ still belong to $\mathfrak g$ and since $\mathfrak  k$ and $\mathfrak  p$ are mutually orthogonal we have  the direct sum decompositions ${\sf m}={\sf m_-}+{\sf m_+}$ and  
\begin{align*}
    \frac{\delta H}{\delta {\sf m}}= \frac{\delta H}{\delta {\sf m}_-}+ \frac{\delta H}{\delta {\sf m}_+}\, .
\end{align*}

The Lie-Poisson Hamiltonian structure of the $G$-strand equation \eqref{LP-Ham-struct-symbol1} decomposes accordingly, as
\begin{align} 
\frac{\partial}{\partial t}
    \begin{bmatrix}
        {\sf m}_- \\
        {\sf m}_+ \\
        {\sf v}_- \\
        {\sf v}_+
    \end{bmatrix}
=
\begin{bmatrix}
    {\rm ad}_{{\sf m}_-} &{\rm ad}_{{\sf m}_+}& \partial_s + {\rm ad}_{{\sf v}_-} &  {\rm ad}_{{\sf v}_+} \\
    {\rm ad}_{{\sf m}_+} & {\rm ad}_{{\sf m}_-}& {\rm ad}_{{\sf v}_+} & \partial_s + {\rm ad}_{{\sf v}_-}\\
   \partial_s + {\rm ad}_{{\sf v}_-} &{\rm ad}_{{\sf v}_+}& 0& 0\\
   {\rm ad}_{{\sf v}_+}& \partial_s + {\rm ad}_{{\sf v}_-} &0 & 0
    \end{bmatrix}
    \begin{bmatrix}
   \delta h/\delta{{\sf m}_-} \\
   \delta h/\delta{{\sf m}_+} \\
   \delta h/\delta{{\sf v}_-} \\
   \delta h/\delta{{\sf v}_+} 
    \end{bmatrix}\, .
\end{align}
As seen in the previous section, the Hamiltonian cannot depend on $\mathfrak k$ in order to obtain an equation on a symmetric space, while writing it on the Lie algebra $\mathfrak g$. 
The previous system thus simplifies to 
\begin{align} 
    \begin{split}
\frac{\partial}{\partial t}
    \begin{bmatrix}
        {{\sf m}_+} \\
        {{\sf v}_+}
    \end{bmatrix}
    &=
    \begin{bmatrix}
        {\rm ad}_{{\sf m}_-} &  \partial_s + {\rm ad}_{{\sf v}_-}\\
        \partial_s + {\rm ad}_{{\sf v}_-} &0 
    \end{bmatrix}
    \begin{bmatrix}
       \delta h/\delta{{\sf m}_+} \\
       \delta h/\delta{{\sf v}_+} 
    \end{bmatrix}\\
    \frac{\partial}{\partial t}
    \begin{bmatrix}
        {{\sf m}_-} \\
        {{\sf v}_-} \\
    \end{bmatrix}
    &=
    \begin{bmatrix}
       {\rm ad}_ {{\sf m}_+}&  {\rm ad}_{{\sf v}_+} \\
       {\rm ad}_{{\sf v}_+}&0
    \end{bmatrix}
    \begin{bmatrix}
       \delta h/\delta{{\sf m}_+} \\
       \delta h/\delta{{\sf v}_+} 
    \end{bmatrix}\, .
    \end{split}
    \label{symspace-system}
\end{align}
This system reflects the structure of symmetric spaces. Namely, the $+$ variables are advected by the $-$ variables and the evolution of the $-$ variables only depends on the $+$ variables. 

\subsection{Reduction to an integrable $\sigma$-model}

Although the system \eqref{symspace-system} is rather general, it can be reduced to an integrable $\sigma$-model by using a quadratic Hamiltonian which depends only on the symmetric space variables indexed by $+$ in accordance with the discussion of the section \ref{un-reduction}. 
For selected constants $a$ and $b$, we set 
\begin{align}
        h({\sf m_+}, {\sf v_+} ) = \frac12 \int \left (a\|{\sf m_+}\|^2 + b\|{\sf v_+}\|^2  \right) ds\, , 
\end{align}
so that 
\begin{align}
    \frac{\delta h}{\delta{{\sf m}_{+}}} =a {\sf m}_{+} \quad\mathrm{and}\quad \frac{\delta h}{\delta{{\sf v}_{+}}} = b{\sf v}_{+}\, . 
\end{align}
Notice that if the Hamiltonian had included the $\sf m_-$ term, one would still have obtained the equation $(\sf m_-)_t=0$, so that we can set $\sf m_-=0$. 
The Hamiltonian structure  for the Hamiltonian simplifies to 
\begin{align} \label{LP-Ham-struct-reduced}
\frac{\partial}{\partial t}
    \begin{bmatrix}
    {{\sf m}_+} \\
    {{\sf v}_+} \\
    {{\sf v}_-}
    \end{bmatrix}
=
\begin{bmatrix}
    {\rm ad}_{{\sf m}_- }&\partial_s + {\rm ad}_{{\sf v}_-} & {\rm ad}_{{\sf v}_+} \\
  \partial_s + {\rm ad}_{{\sf v}_-}& 0& 0 \\
  {\rm ad}_{{\sf v}_+} & 0 &0
    \end{bmatrix}
    \begin{bmatrix}
    \delta h/\delta{{\sf m}_+} \\
    \delta h/\delta{{\sf v}_+} \\
    \delta h/\delta{{\sf v}_-}=0 \\
    \end{bmatrix}\, ,
\end{align}
and we arrive at the following system of equations,
\begin{align} \label{newsigma}
        \begin{split}
        \partial _t {\sf m}_+  &= \partial _s {\sf v}_++ \left[ {\sf v}_-,{\sf v}_+\right] \,,\\
        \partial _t{\sf v}_+&= \partial _s {\sf m}_+ +\left[ {\sf v}_-,{\sf m}_+\right] \,,\\
        \partial _t {\sf v}_-&=\left[ {\sf v}_+,{\sf m}_+\right] \, .
        \end{split}
\end{align}
This is the reduction of the $G$-strand equation on symmetric space to an integrable $\sigma$-model with a zero curvature relation $L_t-M_s+[L,M]=0$, given by the operators
\begin{align}
    \begin{split}
    L&=- {\sf v}_-  -\frac{\lambda}{2}({\sf v}_+ +{\sf m}_+)-\frac{1}{2\lambda}({\sf v}_+ -{\sf m}_+) \\
    M&=-\frac{\lambda}{2}({\sf v}_+ +{\sf m}_+)+\frac{1}{2\lambda}({\sf v}_+ -{\sf m}_+)\, .
    \end{split}
\end{align}
We refer to \cite{Se96, SeSe} for another derivation of these equations. 

We can apply the same change of variables as for the classical chiral model to recast this system into a more familiar form. 
We use $t= \frac12 (x-y) $, $s= \frac12 (x+y)$ and $\xi= m_+-v_+, \eta= m_++v_+$ to obtain
\begin{align}
        \begin{split}
        \partial_x \xi&= \partial _y \eta + \left[ {\sf v}_-,\eta-\xi\right] \\
        \partial_x\xi&= -\partial _y \eta -\left[ {\sf v}_-,\eta+\xi\right] \\
        (\partial_x&-\partial_y) {\sf v}_-=\frac12[\eta,\xi]\,  .
        \end{split}
\end{align}
This system can then be written as 
\begin{align}
        \begin{split}
                \partial_x \xi &= -[v_-,\xi]\\  
                \partial_y \eta &= -[v_-,\eta]\\        
                (\partial_x-\partial_y) {\sf v}_-&=\frac12[\eta,\xi]\, ,
        \end{split}
\end{align}
and the two fields of the relation \eqref{zero-curv1} become 
\begin{align}
\begin{split}
    Q&=- {\sf v}_-  -\lambda\eta \qquad P=-{\sf v}_- +\frac{1}{\lambda}\xi\, ,
\end{split}
\end{align}
with a similar zero curvature relation, i.e. $P_x-Q_y+[P,Q]=0$. 
This equation is clearly different from the $\mathfrak{su}(2)$ model. 
In the case $t\to it$ we have $x\equiv z = s+it,$ $y=\overline{z}=s-it$. 
This example was studied in \cite{Gu} and \cite{Harnad1984} where it is shown that the solutions can be constructed via the dressing method.

\subsection{Example: $\mathfrak{su}(2)$ chiral model on symmetric spaces}

When $\mathfrak g$ is $\mathfrak{su}(2)$ and $\mathfrak k$ is its Cartan subalgebra we have 
\begin{align}
        {\sf v}_-=ia\, \sigma_3 \in \mathfrak k, \quad\mathrm{where}\quad  \sigma_3=\text{diag} (1, -1)\, ,
\end{align} 
and $a(s,t)$ is a real scalar function.
We express ${\sf v}_+$ and ${\sf m}_+$ in function of two complex fields $A_1(s,t)$ and $A_2(s,t)$ as
\begin{align}
    \begin{split}
 {\sf v}_+ & =\left( \begin{array}{cc}
   0 &  A_1+A_2  \\
  -(\overline{A}_1 +\overline{A}_2) &  0 \\
\end{array}  \right) \, \in  \mathfrak p\, ,\qquad  
 {\sf m}_+  =\left( \begin{array}{cc}
   0 &  A_1-A_2  \\
  -(\overline{A}_1 -\overline{A}_2) &  0 \\
\end{array}  \right) \, \in  \mathfrak p\, ,
    \end{split}
\end{align}
where the bar denotes complex conjugation. Introducing a two dimensional complex vector ${\bf A}(s,t)=(A_1, A_2)^T$ and the two-dimensional cross-product ${\bf A}\times{\bf B}=A_1B_2-A_2B_1,$ the $G$-strand equations \eqref{newsigma} become 
\begin{align}
\begin{split}
    \sigma_3 {\bf A} _t& = {\bf A} _s + 2ia {\bf A} \qquad  i a_t = 2 {\bf A} \times \overline{\bf A}\, .
\end{split}
\end{align}
Note that the second equation can expressed as $a_t = - 4\Im(A_1\overline A_2)$, or $a_{tt} = -4\Im ( \sigma_3\mathbf{A}_s\times \overline{\mathbf{ A}})$, after taking one more time derivative.

The conserved quantities can be found by computing $\int \mathrm{Tr}(LM) ds$ from the operators in the compatibility relation \eqref{zero-curv1} and by looking separately at each term in the expansion in the spectral parameter $\lambda$. 
Only two terms do not vanish and give the conserved quantities 
\begin{align}
        C_1= \int |A_1|^2 ds \quad \mathrm{and} \quad C_2= \int |A_2|^2  ds\, .
\end{align}
Notice that while $C_1+C_2$ is the Hamiltonian, $C_1$ and $C_2$ are in fact conserved individually. 
Their associated continuity equations are 
\begin{align}
        \partial_t |A_1|^2 = \partial_s |A_1|^2 \quad \mathrm{and} \quad \partial_t |A_2|^2= - \partial_s |A_2|^2\,, 
\end{align}
or, in the $2$d vectorial form
\begin{align}
        \partial_t\left ( \mathbf{A}\overline{\mathbf{A}}\right ) = \partial_s \left (\mathbf A\sigma_3 \overline {\mathbf A  }\right )
        \quad \mathrm{and} \quad 
        \partial_t\left ( \mathbf{A}\sigma_3 \overline{\mathbf{A}}\right ) = \partial_s \left (\mathbf A\overline {\mathbf A  }\right )\, ,
\end{align}
illustrating the underlying covariance of the equations. 
The real form of this example is equivalent to the case $SO(3)/SO(2)$ by using the isomorphism between $SU(2)$ and $SO(3)$,  commonly used in the theory of complex fluids.

\subsection{Another example: a chiral model on $SO(4)/SO(3)$}

We now increase the dimensions by studying an example with a semi-simple algebra of rank $2$. For this, we pick the symmetric space $SO(4)/SO(3)$. 
We go directly to the equations of motion, by using the previous theory for the general chiral model equations \eqref{newsigma}.
The Lie bracket of $\mathfrak{so}(4)$ is of dimension $6$ and can be written in term of vectors $(X,Y)\in \mathbb R^6$ and $(X',Y')\in \mathbb R^6$ as
\begin{align}
        [(X,Y),(X',Y')] = \left ( X\times X' + Y\times Y', X\times Y' + Y\times X'\right )\, ,     
    \label{so4-algebra}
\end{align}
and one can choose the first $\mathbb R^3$ for $\mathfrak k$. 

Although $\mathfrak{so}(4)$ may be decomposed as $\mathfrak{so}(4)=\mathfrak{so}(3)\oplus \mathfrak{so}(3)$ into a direct sum of subalgebras, this is not the Cartan decomposition of $\mathfrak{so}(4)$; since the complimentary space $\mathfrak{p}$ in the Cartan decomposition is not a subalgebra. The direct sum  decomposition corresponds to the disentangled Lie algebra decomposition of $\mathfrak{so}(4)$, whereas \eqref{so4-algebra} corresponds to the entangled Cartan decomposition $ \mathfrak{so}(4)=\mathfrak{so}(3)\oplus \mathfrak{p} $ with $ \mathfrak{p}\neq \mathfrak{so}(3)$ where $\mathfrak{p} \supset Y, Y'$ is a $3$-dimensional linear space. The two decompositions are associated to different $\mathfrak{so}(4)$ Lie algebra bases which are related via rotations in the $\mathfrak{so}(4)$ Lie algebra vector space. 
The explicit decomposition $(X,Y)$ from \eqref{so4-algebra} in matrix form is represented as
\begin{align*}
    \begin{bmatrix}
        0 & Y \\
        -Y^T & \mathfrak{so}(3) 
     \end{bmatrix} \, , 
     \label{so4/so3-matrix}
\end{align*}
 where the 3D row vector $Y \in \mathfrak{p}$, and the subalgebra $\mathfrak{k}=\mathfrak{so}(3)$ is parametrised by the components of the 3D vector $X$ in a standard way (via the ``hat map'', see \cite{Ho2011GM2,Ho-Iv2}).

\begin{remark}[The $SO(p+q)/(SO(p)\times SO(q)$ decomposition]
This construction is similar to the decomposition of the Lie algebra $\mathfrak{so}(p+q)$ and the derivation of $G$-Strand equation could be done for this algebra in a similar way.  
The algebra $\mathfrak{so}(p+q)$ contains matrices of the form 
\begin{align*}
        X= \begin{bmatrix}
                X_1 & X_2 \\
                -X_2^T & X_3 
        \end{bmatrix}\, , 
\end{align*}
where $X_1=-X_1^T \in \mathfrak{so}(p)$  is $p\times p$ matrix, $X_3=-X_3^T \in \mathfrak{so}(q)$ is $q\times q$, and $X_2$ is $p\times q$. 
If $S=(I_p, -I_q)$ with $I_n$ being an identity matrix $n\times n,$ the Cartan involution is $\tilde{\varphi}(X)=SXS^{-1}$, then $\mathfrak{k}=\mathfrak{so}(p)\times \mathfrak{so}(q)$, the eigenspace of $\tilde{\varphi}$ with eigenvalue $1$, contains matrices of the form 
\begin{align*}
        \begin{bmatrix}
        X_1 & 0 \\
        0 & X_3 
        \end{bmatrix}\, , 
\end{align*}
and for $SO(p+q)/(SO(p)\times SO(q))$, the complimentary space $\mathfrak{p}$ consists of matrices of the form
\begin{align*}
        \begin{bmatrix}
                0 & X_2 \\
                -X_2^T &0 
        \end{bmatrix}\,.
\end{align*}
If $p=1$ then $X_2$ is a $q$ - dimensional vector. In the $SO(4)/SO(3)$ example, the symmetric space consists of $3$-dimensional vectors. 
A very good explanation is given in \cite{h}, see also \cite{KuFo} for similar constructions in the context of the nonlinear Schr\"odinger equation.  

\end{remark}

For the symmetric space $SO(4)/SO(4)$ we can choose $ {\sf m}_+=(0,Y)$, $ {\sf v}_+=(0,Z)$ and $ {\sf v}_-=(X,0)$, the equations become
\begin{align}
    \begin{split}
    \partial_t(0,Y)&=\partial_s(0,Z)+[(X,0),(0,Z)]\,,\\
    \partial_t(0,Z)&=\partial_s(0,Y)+[(X,0),(0,Y)]\,,\\
    \partial_t(X,0)&=[(0,Z),(0,Y)]\, ,
    \end{split}
\end{align}
or, in terms of $X,Y,Z$ only, 
\begin{align}
    \begin{split}
    Y_t &=Z_s+X\times Z \,,\\
    Z_t &=Y_s + X\times Y \,,\\
    X_t&=Z\times Y\, .
    \end{split}
\end{align}
This equation has a conserved quantity
\begin{align}
    C_1= \int Y\cdot Z\,ds\, ,
\end{align}
since 
\begin{align}
    \partial_t C_1 = \partial_s C_1\, .
    \label{C1-conservation}
\end{align}
The Hamiltonian $H= Z^2 + Y^2$ gives the conservative form $\partial_t H = \partial_s C_1$, that has the same flux $C_1$ as for  \eqref{C1-conservation}. 
The complete integrability is clear from the previous discussion, but nevertheless, let's rewrite the Lax pair, as it can be directly written on $\mathfrak{so}(3)$, with 
\begin{align}
    \begin{split}
    L&= - X - \frac{\lambda}{2} ( Z+ Y) - \frac{1}{2\lambda} ( Z- Y)\,,\\
    M&= -\frac{\lambda}{2}(Z+Y) + \frac{1}{2\lambda}( Z-Y)\, ,
    \end{split}
    \label{Lax-SO(4)}
\end{align}
where the commutator of the ZCR \eqref{zero-curv1} is the cross product. 
Despite the complete integrability of this system, we observe instabilities for high frequencies. 
This is found from the dispersion relation of the linearised equation around the equilibrium solution $X_e= xe_1, Y_e= ye_1$ and $Z_e= ze_1$, with $x,y,x \in \mathbb R$, by
\begin{align}
     \left( \kappa^2 - \omega^2\right)  \left(  (\omega^{2}+z^2)^2 - (\kappa^{2}- (x^2+y^2))^2\right ) = 0 \, .
\end{align}
Apart from $\omega(\kappa)= \pm \kappa$, there are $4$ other branches
\begin{align}
    \omega(\kappa) =\pm \sqrt{-z^2 \pm(  \kappa^{2} -  (x^{2} + y^{2}))}\, ,
\label{omega(k)}
\end{align}
one of which shows instabilities at high wave numbers, similarly to the $G$-strand equations derived in \cite{FDT}. 

Following the approach of \cite{arnaudon2016deformation} for deforming integrable systems using the Sobolev $H^1$ norm, we may introduce nonlocality into the $G$-strand equations in order to regularise them.
We use the notation $\Lambda L:=(1-\alpha^2\partial_s^2)L$, and a direct application of \cite{arnaudon2016deformation} shows that $\Lambda$ does not appear in the $M$ operator. 
We, therefore, obtain the following deformed equations of motion
\begin{align}
    \begin{split}
    \Lambda Y_t &=Z_s+(\Lambda X)\times Z \\
    \Lambda Z_t &=Y_s + (\Lambda X)\times Y \\
    \Lambda X_t&=\frac12 \left ( (\Lambda Z)\times Y+ Z\times (\Lambda Y)\right ) \, .
    \end{split}
\end{align}
This equation seems not to be integrable, as the $\lambda^2$ and $\lambda^{-2}$ terms do not vanish.
Interestingly, the corresponding dispersion relation becomes
\begin{align}
    \omega(\kappa) =\pm \sqrt{ -z^2\pm \frac{ \kappa^{4}    - (x^{2}+A(\alpha) y^2)^2 }{(1+\alpha^{2} \kappa^{2})^2}}\, .
\end{align}
The unstable branch is thus bounded from above by $\frac{1}{\alpha^2}$ and the ill-posedness of the original equation is replaced by an unstable wave number regime. 
Other regularising terms may be added, or the limit $\alpha\to \infty$ may be taken in order to reduce these instabilities. 
In the latter case, the upper bound for the unstable branch will tend to $0$. 
Notice that in this limit, the $\partial_s$ terms will disappear, and the system will be equivalent to the finite dimensional reduction of the system explored below, which is integrable. 

{\bf Finite dimensional reduction.} In the case of $s$-independent fields, the system reduces to an integrable nine dimensional dynamical system given in vector form by
\begin{align}
    \begin{split}
            X_t&=Z\times Y, \quad Y_t =X\times Z \quad \mathrm{and} \quad Z_t =X\times Y\, .
    \end{split}
    \label{XYZ}
\end{align}
The Hamiltonian is $H=\frac12(\|Y\|^2+\|Z\|^2)$ and there are three conserved quantities
\begin{align}
\begin{split}
    C_1&= Y\cdot Z,\quad  C_2 = Y\cdot X,\quad C_3=  X\cdot Z,\
    \quad \hbox{and a Casimir} \quad C_4 =\|X\|^2+\|Z\|^2\, . 
\end{split}
\end{align}
Only the last conserved quantity is a Casimir, as one can see from the form of the Hamiltonian structure $J:\mathbb R^9\times \mathbb R^9\to \mathbb R^9$ given by
\begin{align}
    J(X,Y,Z)\nabla f := 
    \begin{pmatrix}
        0 & Z\times & 0 \\
        Z\times &0  & X\times \\
        0 & X\times & 0 
    \end{pmatrix}
    \begin{pmatrix}
        \partial_X f\\
        \partial_Y f\\
        \partial_Z f
    \end{pmatrix}\, .
    \label{J-XYZ}
\end{align}
The system thus lives in a eight dimensional space given as a subspace of $\mathbb R^8$ with Casimir $C_4$.
The remaining four constants of motions give us the complete integrability, provided they are in involution with respect to the Hamiltonian structure $J$ in \eqref{J-XYZ}. 
The latter fact can directly be checked and these conserved quantities can also be computed by expanding the quantity $ M\cdot L$ of \eqref{Lax-SO(4)} in term of powers of $\lambda$. 
Notice that these are not all conserved for the $1+1$ equation, as the complete $M$ operator is an infinite series in $\lambda$, which we will not compute here. 

Explicit solutions may be computed in term of elliptic curves directly using the Lax pair, but we leave this exercise for elsewhere and illustrates typical orbits via numerical integration in Figure \ref{fig:Fig-XYZ}.
Most of the orbits are periodic, and we displayed one which has intersecting orbits and another one which has degenerate periodic orbits. 
\begin{figure}[ht]
    \subfigure{\includegraphics[scale=0.45]{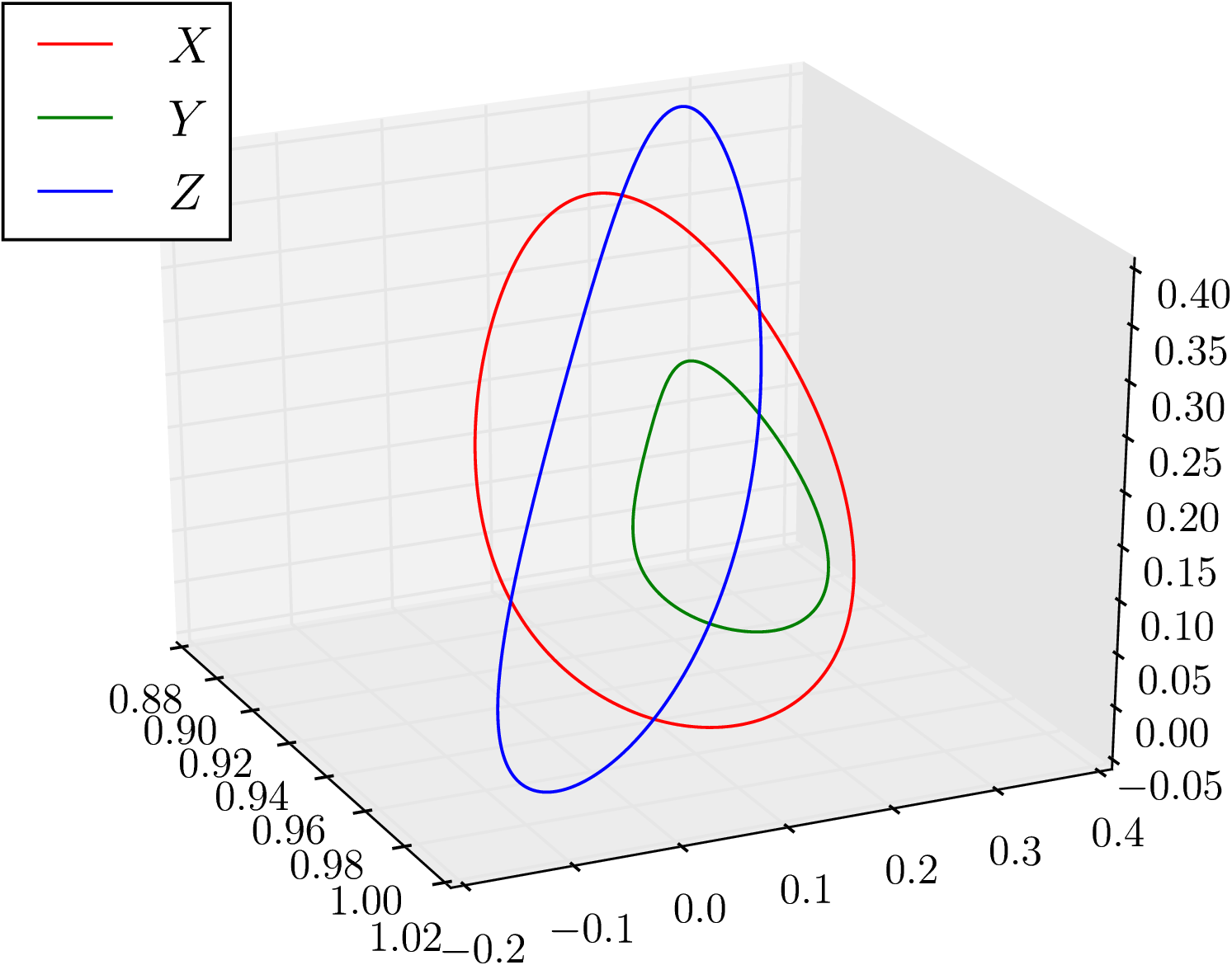}}
    \subfigure{\includegraphics[scale=0.45]{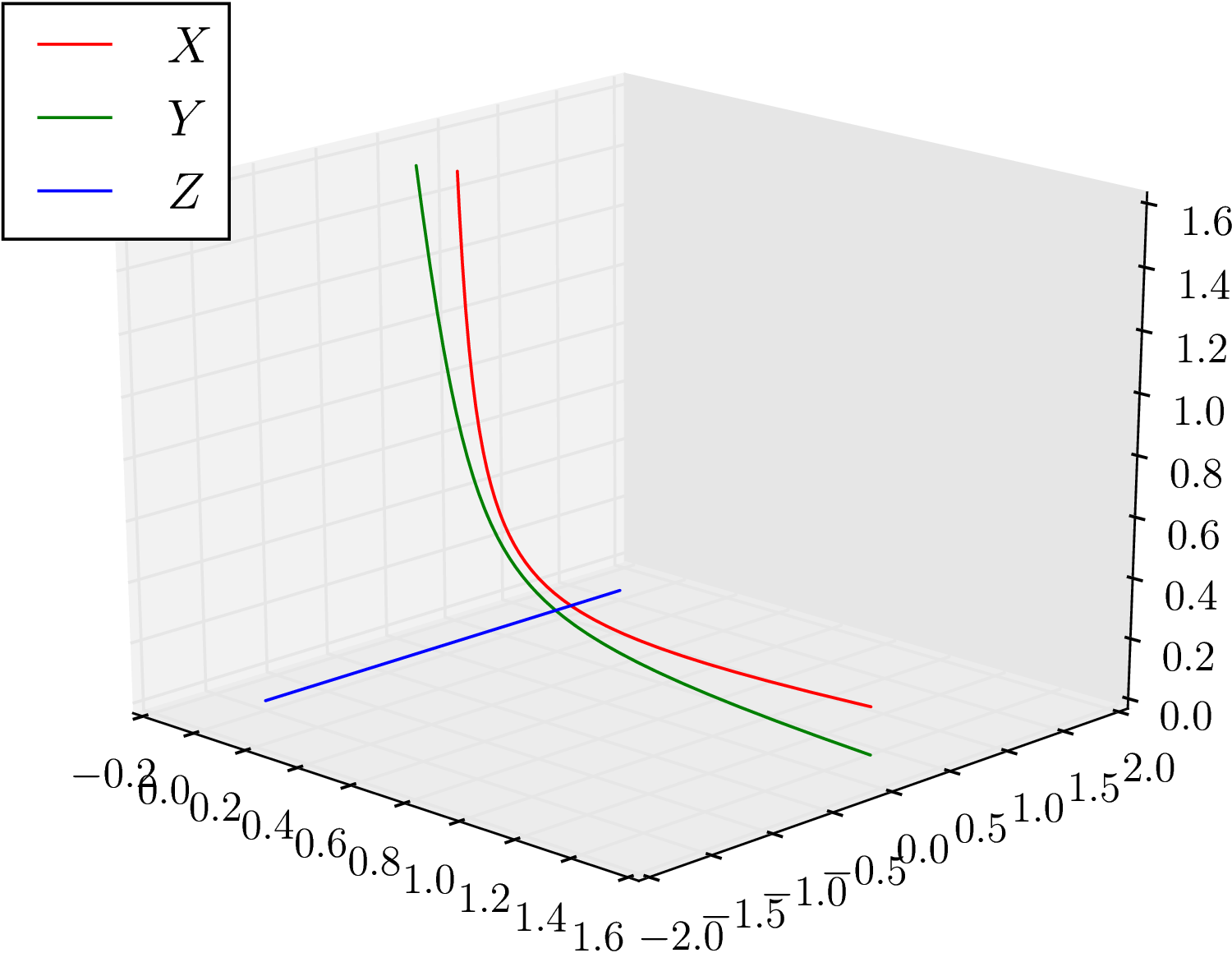}}
    \caption{We display here periodic orbits of the system \eqref{XYZ} arising from two different initial conditions. 
Both cases show interesting features. The left panel shows orbits of $X$ and $Z$ orbits intersecting in four points, and the right one shows degenerate periodic orbits.  }
    \label{fig:Fig-XYZ}
\end{figure}

The equilibrium solutions of this system are given whenever a linear combination of $X$, $Y$ and $Z$ vanishes. In this case, the quantity $H_C:=H+\sum_{i=1}^4\mu_i C_i$ has a critical point so that $\delta H_C=0$ for a particular set of coefficients $\mu_i$. 
Thus, the equilibria occur when $X$,$Y$ and $Z$ are aligned. 
These equilibria are stable provided the coefficients $\lambda_i $ are chosen so that the second variation $\delta^2 H_C$ has a strictly definite signature.

\section{ $\mathrm{ Diff}(\mathbb R)$-strands on symmetric spaces} \label{diff-strand}

We now turn to the study of $G$-strands on the diffeomorphism group, already investigated in \cite{Ho-Iv1}, but not in the context of symmetric spaces. 
The symmetric space structure for the diffeomorphism groups corresponds to the even or odd functions, thus the corresponding Diff-strand equations will show a particular interaction between the odd and even parts of the functions. 
We will start by recalling previous results on Diff-strands, and derive the equations with the symmetric space structure, then end with an example of strand peakon anti-peakon collisions. 

\subsection{Camassa-Holm equation on symmetric spaces}

We first recall the Camassa-Holm equation \cite{Ca-Ho}
\begin{align}
        m_t+2mu_x+m_xu=0,\qquad m=u-u_{xx}\,, 
    \label{CH_1}
\end{align}
which can be written in a Hamiltonian form
\begin{align}
     m_t=\{m,H_1\}\, , 
    \label{CHPB}
\end{align}
where, assuming for convenience that $m$ is $2\pi$ periodic in $x$, i.e.
$m(x)=m(x+2\pi)$, the Poisson bracket is 
\begin{align}
    \{F,G\}&:= -\int_{0} ^{2 \pi}\frac{\delta F}{\delta m}\left(m \partial+\partial m \right)\frac{\delta G}{\delta m}\,{\text d}x = -\int_{0} ^{2 \pi}\frac{\delta F}{\delta m}\mathrm{ad}^*_{\frac{\delta G}{\delta m}}m \,{\text d}x \, , 
    \label{PB_1}
\end{align}
and the Hamiltonian is
\begin{align}
    H_1(m)=\frac{1}{2}\int_{0} ^{2 \pi} m\left (1-\partial^2\right )^{-1} m\, {\text d}x\, . 
    \label{H_1}
\end{align}
Although the CH equation \eqref{CH_1} is integrable and admits a bi-Hamiltonian structure, we will not discuss its integrability here. Instead, we will exhibit its symmetric space structure by relating the Poisson bracket \eqref{PB_1} to the Witt algebra $W$. We refer the interested reader to \cite{Iv2005} and references therein for more details on this topic and in particular when a central extension is used to produce the dispersive Camassa-Holm equation.  
We did not implement this extension here and leave it for future works. 
First, upon imposing periodic boundary conditions on solutions of the CH equation, we may use the Fourier series expansion of the solution $m$, written as  
\begin{align}
    m(x,t)&=\frac{1}{2\pi}\sum_{n\in\mathbb{Z}}L_n(t) e^{inx}\, . 
    \label{exp-m}
\end{align}
Notice that the reality of $m$ gives the relation $L_{-n}=\overline{L}_n$.
Then, the Fourier coefficients $L_{n}$ form a classical Witt algebra without a central charge with respect to the Poisson bracket \eqref{PB_1}.
This is seen by directly computing the Poisson brackets among the Fourier coefficients
\begin{align}
    i\{L_{n},L_{m}\}=(n-m)L_{n+m}\, . 
    \label{eq8} 
\end{align}
In the Witt algebra, $ H_0=\frac{L_0}{2\pi}$ is an integral of motion and the Hamiltonian \eqref{H_1} decomposes as 
\begin{align}  
        H_1= \frac{1}{4\pi}\sum_{n\in \mathbb{Z}}\frac{L_{n}L_{-n}}{1+n^2}\, .
    \label{ham} 
\end{align}

One may now recognise the structure of a symmetric space, as follows. As linear sub-spaces of the Witt algebra $W$, 
\begin{enumerate}
        \item $\frak{k}$ is spanned by $L_{2k}$, $k \in \mathbb{Z}$, and
        \item  $\frak{p}$ is spanned by $L_{2k+1}$; $k \in \mathbb{Z}$.
\end{enumerate}
Then the Witt algebra can be decomposed as a direct sum $W=\frak{k}\oplus \frak{p}$ and one may check that it satisfies the commutation relation of a symmetric space given in \eqref{sym-relation}. 
This decomposition is in fact obtained by a $\mathbb{Z}_2$--grading of the Witt algebra and is equivalent to the splitting of $m(x, \cdot)$ into even and odd functions. 
Indeed, the algebra of vector fields $\mathfrak{v}$ on the circle with commutator 
\begin{align}
        [f(x),g(x)]=fg_x - f_xg 
\end{align}
is isomorphic to the Witt algebra. 
This is obvious if one takes a basis $l_n=e^{inx}$, $n\in \mathbb{Z}$.
As an infinite dimensional vector space $\mathfrak{v}$ is the space of the $2\pi$ periodic functions, which can be expanded in Fourier series over $l_n$.
Now, the Cartan involution 
\begin{align*}
        ( \tilde{\varphi} m)(x)=m(x+\pi)\, , 
\end{align*}
splits $\mathfrak{v}$ into subspaces of even mode functions (where $\tilde{\varphi}$ has eigenvalue 1) and odd mode functions (where $\tilde{\varphi}$ has eigenvalue -1). 
Moreover, the decomposition into even and odd modes is 
\begin{align*}
        m^{em}(x)=(1/2)(m(x)+m(x+\pi))\quad \mathrm{and} \quad m^{om}(x)=(1/2)(m(x)-m(x+\pi))\, . 
\end{align*}
These are orthogonal with respect to the $L_2$-inner product
\begin{align}
        (m(x),n(x))=\frac{1}{2\pi} \int_{0}^{2\pi}m n dx \Longleftrightarrow (l_n, l_k)=\delta_{n+k,0}\, .
\end{align}
By a new variable identification $e^{ix}\mapsto z$ the subspace of the even modes is naturally isomorphic to the subspace of even functions, and the subspace of odd modes is isomorphic to the subspace of the odd functions. Thus, we can identify $\frak{k}$ with the subspace of even functions, and  $\frak{p}$ with the subspace of odd functions.

It is known that the CH equation admits odd solutions, including the peakon solutions, see \cite{constantin2000existence,constantin2002geometric} and references therein.
This will also be the case for the $G$-strand constructions, as we will see below.

\subsection{ Diff($\mathbb{R}$)-strand equations on symmetric spaces}

We now derive the equation of motion for the Diff($\mathbb R$)-strands where $s$ denotes the strand variable and the $x$ coordinate labels Diff($\mathbb R$), where the odd or even functions are defined.
For this, we consider a Lagrangian $\ell=\ell(u,v)$ depending on two fields $u(s,t,x)$ and $v(s,t,x)$.
We will also need to introduce the momenta $m={\delta\ell}/{\delta u } $ and $n={\delta\ell}/{\delta v }$.  
The equations arising are right-invariant ${\rm Diff}(\mathbb{R})$-strand equations for maps $\mathbb{R}\times\mathbb{R}\to G={\rm Diff}(\mathbb{R})$ and in one spatial dimension they may be expressed as a system of two 2+1 PDEs in $(s,x,t)$,
\begin{align}
    \begin{split}
    m_t + n_s &+ {\rm ad}^*_{u }m + {\rm ad}^*_{v }n =0  \\ 
    v_t - u _s &+ {\rm ad}_v u =0 \,,
    \end{split}
\label{Gstrand-eqn2R}
\end{align}
or, using the explicit form of the coadjoint action, 
\begin{align}
    \begin{split}
    m_t + n_s &+ u m_x + 2m u_x + v n_x +  2nv_x=0  \\
    v_t - u _s &+ u v_x - v u_x =0 \,.
    \end{split}
\label{Gstrand-eqn2R-explicit}
\end{align}
The Hamiltonian structure for these  ${\rm Diff}(\mathbb{R})$-strand equations is obtained by the Legendre transformation
\begin{align*}
    h(m,v)=\langle m,\, u\rangle_{L^2} - \ell(u,\,v)\, ,
\end{align*}
resulting in
\begin{align}
    \frac{d}{dt}
    \begin{bmatrix}
        m \\ 
        v
    \end{bmatrix}
    =
    \begin{bmatrix}
    -{\rm ad}^*_\square m  &\quad \partial_s + {\rm ad}^*_v \\
    \partial_s - {\rm ad}_v  &\quad 0
    \end{bmatrix}
    \begin{bmatrix}
        {\delta h}/{\delta m} = u \\
        {\delta h}/{\delta v} = - n
    \end{bmatrix}\, .
    \label{1stHamForm}
\end{align}
Using the odd/even decomposition introduced in the previous section, the Hamiltonian structure \eqref{1stHamForm} becomes
\begin{align} \label{LP-Ham-struct-sym space}
\frac{\partial}{\partial t}
    \begin{bmatrix}
         m^e \\
         m^o \\
         v^o \\
         v^e
    \end{bmatrix}
    =
    \begin{bmatrix}
     - {\rm ad}^\ast_\square { m^o}
       &-{\rm ad}^\ast_\square { m^e}&{\rm ad}^*_{ v^e} & \partial_s + {\rm ad}^*_{ v^o} \\
      - {\rm ad}^\ast_\square { m^e}
        &-{\rm ad}^\ast_\square { m^o}& \partial_s + {\rm ad}^*_{v^o} & {\rm ad}^*_{v^e} \\
      -{\rm ad}_{v^e} & \partial_s - {\rm ad}_{v^o} & 0 & 0 \\
       \partial_s - {\rm ad}_{v^o}& -{\rm ad}_{v^e} & 0 & 0
    \end{bmatrix}
    \begin{bmatrix}
       \delta h/\delta{ m^{e}=u^{e}} \\
       \delta h/\delta{ m^o=u^o} \\
       \delta h/\delta{{\sf v}^o=-n^o} \\
       \delta h/\delta{{\sf v}^e=-n^e} \\
    \end{bmatrix}\, .
\end{align}
In particular, when $h=h(m^o,v^o)$ depends only on the symmetric space variables, the odd and even parts of the equation decouple as
\begin{align} \label{odddyn}
    \frac{d}{dt}
    \begin{bmatrix}
        m^o \\ 
        v^o
    \end{bmatrix}
    &=
    \begin{bmatrix}
        -{\rm ad}^*_\square m^o  &\quad \partial_s + {\rm ad}^*_{v^o} \\
        \partial_s - {\rm ad}_{v^o}  &\quad 0
    \end{bmatrix}
    \begin{bmatrix}
        {\delta h}/{\delta m^o} = u^o \\
        {\delta h}/{\delta v^o} = - n^o
    \end{bmatrix}\\
    \frac{d}{dt} 
    \begin{bmatrix}
            m^e \\ 
        v^e
    \end{bmatrix}
    &=
    \begin{bmatrix}
            -\mathrm{ad}^*_\square m^e &  \mathrm{ad}^*_{v^e}\\
            -\mathrm{ad}_{v^e} & 0 
    \end{bmatrix}
    \begin{bmatrix}
        {\delta h}/{\delta m^o} = u^o \\
        {\delta h}/{\delta v^o} = - n^o
    \end{bmatrix}\, . 
    \label{evendyn}
\end{align}
The structure of these equations reflects the known property that only odd solutions can survive alone in the CH equation (which is the reduction $u\equiv v$, $m\equiv n$ and $s\equiv t$ of the ${\rm Diff}(\mathbb{R})$-strand equation) and are of the type peakon and anti-peakon collisions, governed by \eqref{odddyn}.
As soon as the solution has an even part, the dynamics become more complicated due to the coupling with \eqref{evendyn}. 
In addition, the second system for the even variables does not have any derivative with respect to the $s$ variables and the Hamiltonian structure depends only on the even variables. 

\subsection{Singular solutions of the ${\rm Diff}(\mathbb{R})$-strand  equations}

We now derive the explicit solution of the ${\rm Diff}(\mathbb{R})$-strand system by reduction to a system of peakons that still depend on two variables, $s$ and $t$. 
For simplicity, we will make the following choice for the Lagrangian
 \begin{align}
    \ell (u ,v ) = \frac12 \int (u_x^2+v_x^2)dx \,,
\label{Gstrand-pkn-Lag}
\end{align}
that corresponds to a two-component generalization of the Hunter-Saxton equation \cite{HS,HZ}.
The ${\rm Diff}(\mathbb{R})$-strand equations \eqref{Gstrand-eqn2R} admit peakon solutions in both momenta
\begin{align*}
    m=-u_{xx} \quad \text{ and} \quad n=-v_{xx}\, ,
\end{align*}
with continuous velocities $u$ and $v$. 
It can be directly checked that the ${\rm Diff}(\mathbb{R})$-strand equations \eqref{Gstrand-eqn2R} admit singular solutions expressible as a linear superposition of Dirac delta functions in the momenta. 
The general form is 
\begin{align}
\begin{split}
        m(s,t,x) &= \sum_a M_a(s,t)\delta(x-Q^a(s,t))\,, \qquad n(s,t,x) = \sum_a N_a(s,t)\delta(x-Q^a(s,t)) \\    
        u(s,t,x)  &=K*m=\sum_a M_a(s,t) K(x,Q^a)\, ,  \qquad v(s,t,x)  = K*n=\sum_a N_a(s,t) K(x,Q^a)\,,
    \end{split}
    \label{Gstrand-singsolns1}
\end{align}
where $K(x,y)= -\frac12 |x-y|$ is the Green function of the operator $-\partial_x^2$, i.e. $-\partial_x^2K(x,0)=\delta(x)$.
The solution parameters $\{Q^a(s,t), M_a(s,t), N_a(s,t)\}$ with $a\in\mathbb{Z}$ that specify the position and amplitude of singular solutions \eqref{Gstrand-singsolns1} are determined by the following set of evolutionary PDEs in $s$ and $t$, in which we denote $ K^{ab}:=K(Q^a,Q^b) $ with integer
summation indices $a,b,c,e\in\mathbb{Z}$:
\begin{align}
    \begin{split}
    \partial_t Q^a(s,t) &= u(Q^a,s,t) = \sum_b M_b(s,t) K^{ab} \\
    \partial_s Q^a(s,t) &= v(Q^a,s,t) =\sum_b N_b(s,t) K^{ab} \\
    \partial_t M_a(s,t) &= -\, \partial_s N_a -\sum_c (M_aM_c+N_aN_c) \frac{\partial K^{ac}}{\partial Q^a}
    \quad\hbox{(no sum on $a$)} \\
    \partial_t N_a(s,t) &=\partial_s M_a +    \sum_{b,c,e} (N_bM_c - M_bN_c) \frac{\partial K^{ec}}{\partial Q^e} (K^{eb}-K^{cb})(K^{-1})_{ae} \,.
    \end{split}
\label{Gstrand-eqns}
\end{align}
The last pair of equations in (\ref{Gstrand-eqns}) may be solved
as a system for the momenta
$(M_a,N_a)$, then used in the previous pair to update the positions $Q^a(t,s)$ of the singular solutions.
We will call these singular solutions `peakons' for simplicity, although the Green function, in this case, is unbounded and the shape is not the usual peakon shape.

\subsection{Example: Two-peakon solution of a ${\rm Diff}(\mathbb{R})$-strand}

We now study a simpler system of two `peakon' collisions and find explicit solutions.
If we denote the relative position of the two solutions by $X(s,t)=Q^1-Q^2$ we can express the
Green's function as $K=K(X)$ and the first two equations in \eqref{Gstrand-eqns} imply
\begin{align}
    \partial_t X = -(M_1-M_2)K(X)\, ,\qquad \partial_s X = - (N_1-N_2)K(X)\, .
    \label{Qdiff-eqns}
\end{align}
The second pair of equations in (\ref{Gstrand-eqns}) may then be
written as
\begin{align}
    \begin{split}
    \partial_t M_1 &= - \partial_s N_1 - (M_1M_2+ N_1N_2)K'(X) \\
    \partial_t M_2 &= - \partial_s N_2 + (M_1M_2+ N_1N_2)K'(X) \\
    \partial_t N_1 &=  \partial_s M_1 - (N_1M_2-M_1N_2)K'(X) \\
    \partial_t N_2 &=  \partial_s M_2 - (N_1M_2-M_1N_2)K'(X) \,.
    \end{split}
    \label{Gstrand-pp}
\end{align}
Assuming $X>0$ we have $K'(X)=-\frac{1}{2} \text{sgn}(X)=-\frac{1}{2}$ and introducing the variable $S_{1,2}=M_{1,2}+iN_{1,2}$ we can rewrite \eqref{Gstrand-pp} as  
\begin{align}
    (\partial_t -i\partial_s)S_1 = \frac{1}{2}S_1\overline{S}_2\, , \qquad (\partial_t -i\partial_s)S_2 = -\frac{1}{2}\overline{S}_1S_2 \,.
    \label{Gstrand-Lform}
\end{align}
The solution for $X$ can then be expressed formally via $S_{1,2}$ from
\eqref{Qdiff-eqns} as 
\begin{align*}
     X=\exp\left(\frac{1}{2}\Delta^{-1} \Re(S_1\overline{S}_2)\right)\, , 
\end{align*}
where $\Delta=\partial_t^2+ \partial_s^2$  and $\Re(z) $ is the real part of $z$.
From the system \eqref{Gstrand-Lform} we obtain  
\begin{align}
    \Delta \ln S_1 = -\frac{1}{4}S_1\overline{S}_2\quad \mathrm{and} \quad \Delta \ln S_2 = -\frac{1}{4}\overline{S}_1S_2\, , 
    \label{Gstrand-Lform3}
\end{align} 
thus $\Delta \ln S_1 = \Delta \ln \overline{ S}_2 $ and
$S_1=\overline{S}_2e^{h}$ where $h(s,t)$  is an arbitrary harmonic function, i.e. $\Delta h =0$. 
Then for the variable  $\widetilde{Y}=\ln S_1$ we have the equation 
\begin{align} 
    \Delta \widetilde{Y} = -\frac{1}{4}e^{2\widetilde{Y}-h}\, ,
    \label{Liouvile}
\end{align} 
and for $Y=\ln S_1-\frac{1}{2}h -2\ln 2+i \pi$ we arrive at the Liouville's $2D$ equation
\begin{align} 
    \Delta Y = e^{2Y}\, . 
    \label{Liouvile1}
\end{align}
Solutions of \eqref{Liouvile1} are known in the form 
\begin{align*}
     Y=\frac{1}{2}\ln\frac{w_s^2+w_t^2}{f(w)}\, ,
\end{align*}
where the function $f(w)$ can be either $w^2$, $\cos^2 w$, $\sin^2 w$ or $\sinh^2 w$ with
$w$ being an arbitrary harmonic function, i.e. $\Delta w=0$. We refer to \cite{DC,Ib,Ki02} for more details on the Liouville equation. 
From this computation, the solutions $S_{1,2}$ depend on two arbitrary complex harmonic functions $h,w$, hence the four peakon parameters $M_{1,2}$ and $N_{1,2}$ can be given in terms of four real arbitrary harmonic functions.
Although we lost track of the space-time symmetry of the original equations when we applied the Legendre transformation, some flavour of it still remains, in that the harmonic condition for the arbitrary functions $w$ and $h$ is symmetric in $s$ and $t$.

\subsection{Two-peakon solution on the symmetric space: the peakon-antipeakon solution}

The reduction of the system \eqref{Gstrand-pp} to a system with only odd functions of $x$ can be achieved by setting $Q_2=-Q_1$, $M_2=-M_1$, $N_2=-N_1$ thus $S_2=-S_1$ and $Q_1=X/2$, where
\begin{align} 
        X=\exp\left(-\frac{1}{2}\Delta^{-1}|S_1|^2\right)\, . 
    \label{X} 
\end{align}
We will solve this reduced system directly instead of starting from the general solution of the initial system.  

The equivalent equation \eqref{Gstrand-Lform} becomes, after imposing the symmetry condition, a single complex equation
\begin{align}
    (\partial_t -i\partial_s)S_1 = - \frac{1}{2}|S_1|^2\,.
    \label{redS}
\end{align}
The imaginary part of \eqref{redS} simplifies to
\begin{align}
    \partial_t N_1 = \partial_s M_1\, ,
    \label{redS1}
\end{align} 
and it can be solved by introducing a real scalar function $\psi$ such that 
\begin{align}
        M_1=\partial_t \psi \quad\mathrm{and} \quad  N_1= \partial_s \psi\, . 
    \label{MN1}
\end{align}
The real part of \eqref{redS} produces a dynamical equation for $\psi$
\begin{align}\label{psi}
    \Delta \psi +\frac{1}{2} \left( \psi_t^2 + \psi_s^2\right)=0\,  , 
\end{align}
or equivalently,
\begin{align}
    \Delta \exp \left(\frac{1}{2} \psi\right )=0\, .  
    \label{psi 1}
\end{align}
The solution is given in term of a real harmonic function $h(s,t)$ as
\begin{align}
    \psi=\ln (h^2(s,t))\, .
    \label{psi 2}
\end{align}
From \eqref{X} and \eqref{MN1} one can derive 
\begin{align}
    \Delta \ln |X| +\frac{1}{2} \left( \psi_t^2 + \psi_s^2\right)=0\, , 
    \label{|X|}
\end{align} 
and thus from \eqref{psi}
\begin{align}
    |X| = \exp (\psi(s,t))=h^2(s,t)\, .
    \label{solX}
 \end{align}  
From the definitions of $Q_1,M_1$ and $N_1$, we have 
\begin{align}
        Q_1=\frac{1}{2} h^2(s,t),  \qquad M_1(s,t)= \frac{2h_t}{h} \quad\mathrm{and}\quad  N_1(s,t)= \frac{2h_s}{h}\, .
    \label{solMN}
 \end{align}  
Similarly to the general case of the previous section, all solutions are parametrised by an arbitrary harmonic function $h(s,t)$. 
The solutions $m$, $n$ are obviously odd functions of $x$, as they can be written
\begin{align}
    \begin{split}
    m(s,t,x) &= \frac{2h_t}{h}\left( \delta\left(x-\frac{h^2}{2}\right)- \delta\left(x+\frac{h^2}{2}\right)\right) \\
    n(s,t,x) &= \frac{2h_s}{h}\left( \delta\left(x-\frac{h^2}{2}\right)- \delta\left(x+\frac{h^2}{2}\right)\right) \,. 
\end{split}
    \label{oddsol}
\end{align} 
We end this section with an illustration of these solutions, using the harmonic functions indexed by $k\in \mathbb Z$  
\begin{align}
    h_k(s,t)= r^k\cos(k\theta), \quad \mathrm{where}\quad  r= \sqrt{s^2+t^2} \quad\mathrm{and}\quad  \theta = \arctan\left (\frac{t}{s}\right)\, . 
    \label{harm-func}
\end{align}
The main properties of the peakon anti-peakons are recovered, namely that they vanish at the position of interaction, and then exchange their momenta. 

\begin{figure}[ht]
    \subfigure[$t=-10$]{\includegraphics[scale=0.4]{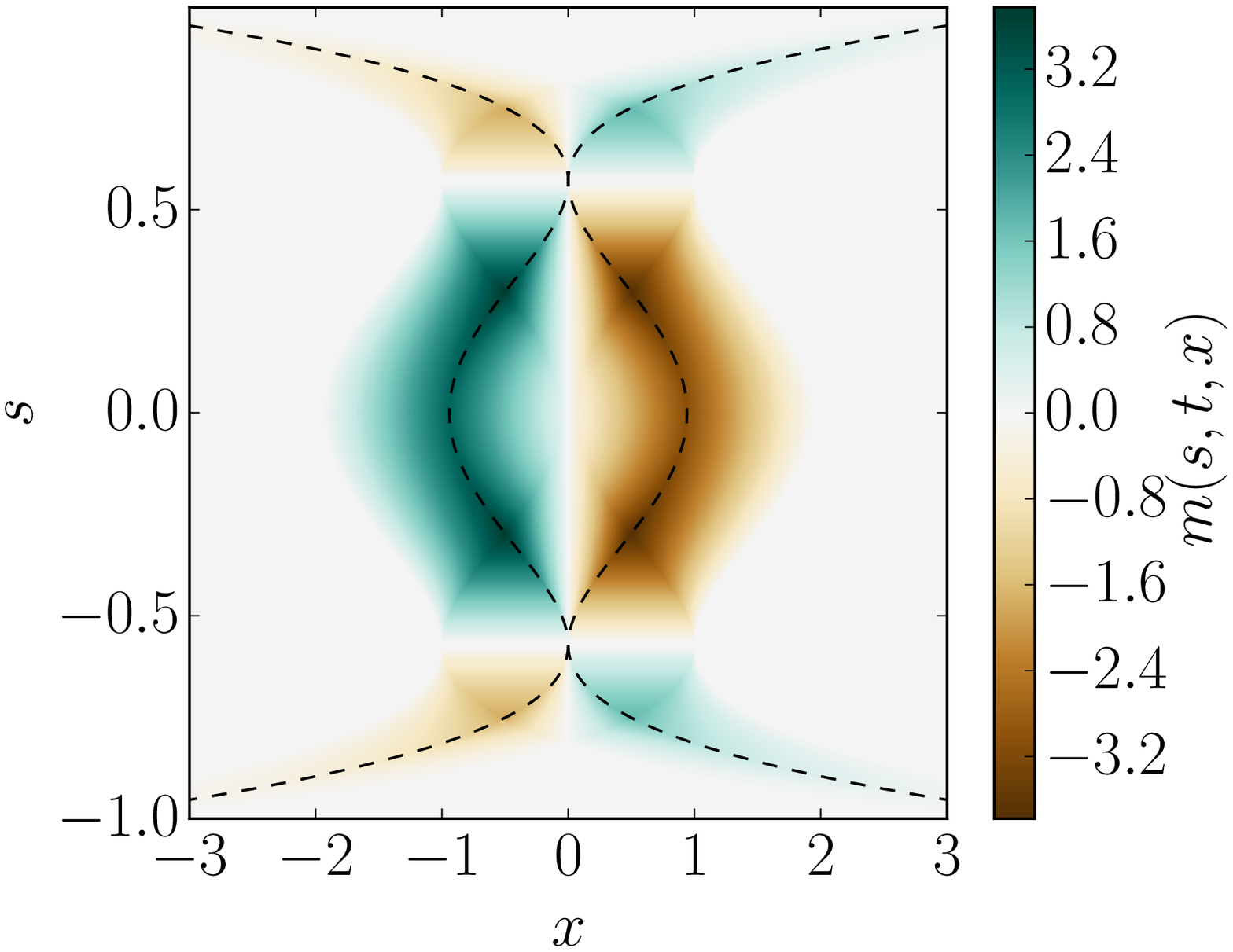}}
    \subfigure[$t=-7$]{\includegraphics[scale=0.4]{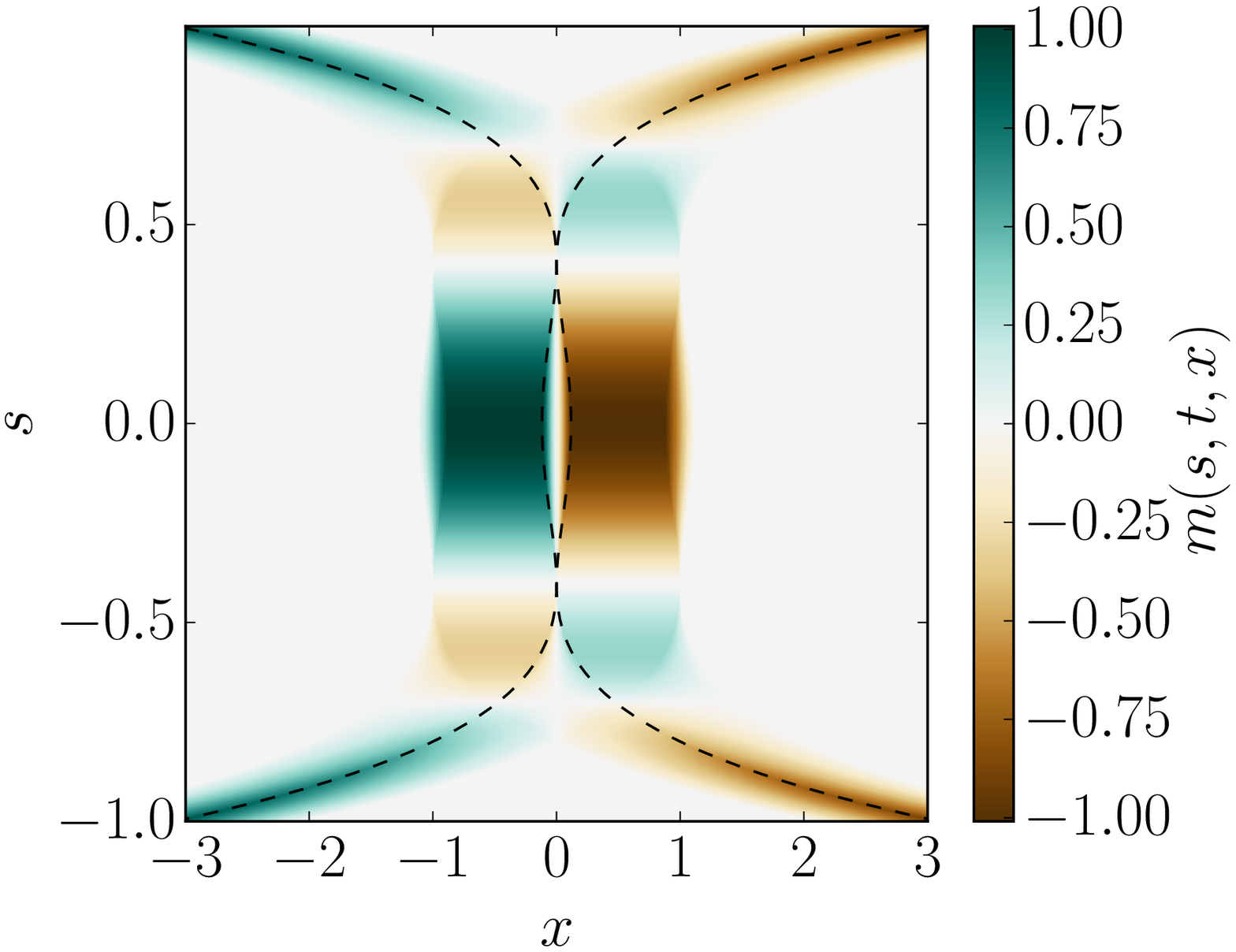}}
    \subfigure[$t=-0.1$]{\includegraphics[scale=0.4]{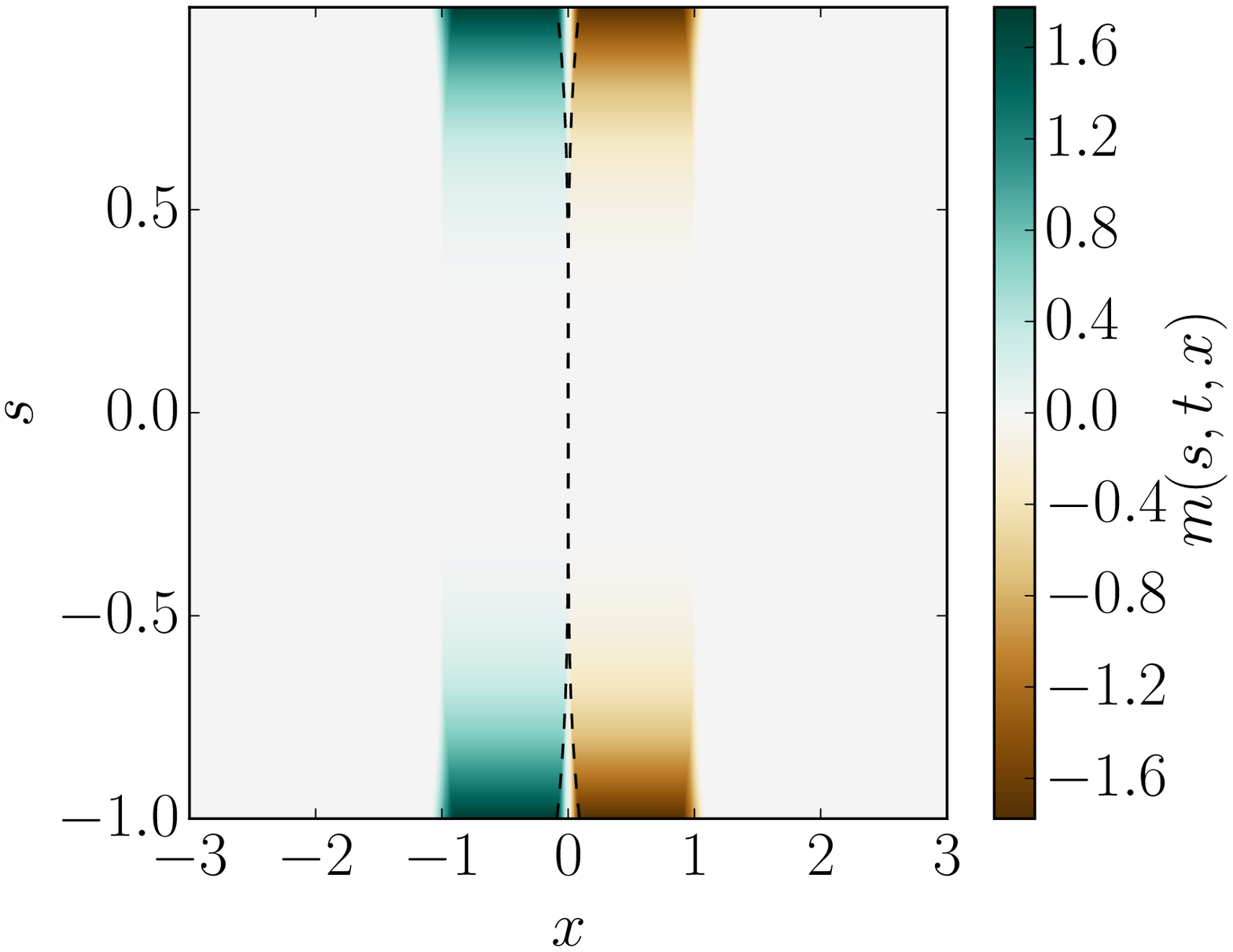}}
    \subfigure[$t=4$]{\includegraphics[scale=0.4]{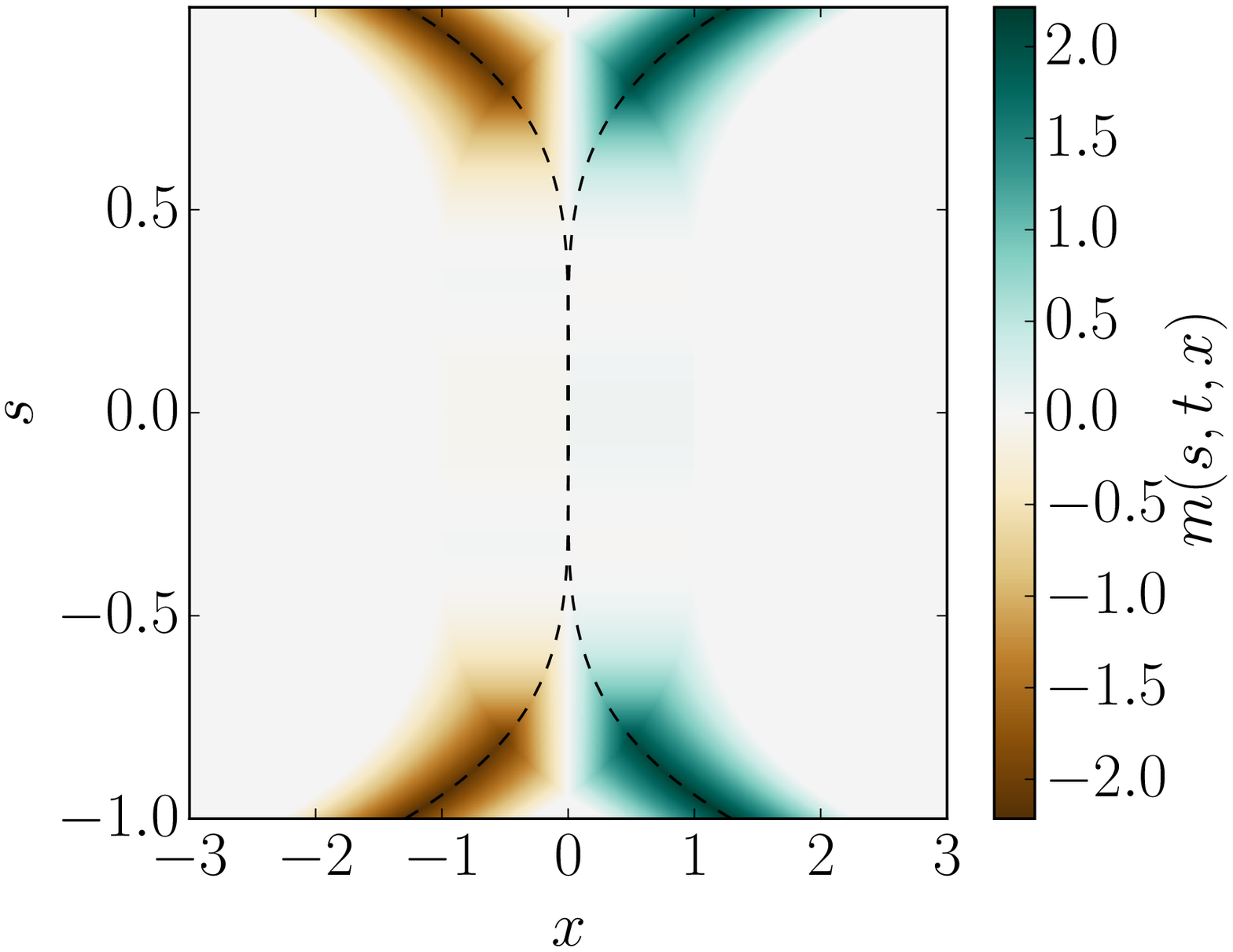}}
    \caption{We display the Hunter-Saxton peakon solution $m(s,t)$ in \eqref{oddsol} for the harmonic function \eqref{harm-func} with $k=3$ at various times around the collision at $t=0$. 
The order of the harmonic function gives the number of zeroes of the function; that is, the number of intersecting points of the two peakon strands.
    Note that the solution for $n(s,t)$ in \eqref{oddsol} is the same, thus not displayed here.}
    \label{fig:peakons}
\end{figure}

We display in figure \ref{fig:peakons} a few snapshots of the collision, obtain by plotting \eqref{oddsol} with the harmonic function \eqref{harm-func} with $k=3$
\footnote{See \url{http://wwwf.imperial.ac.uk/~aa10213/} for a video of the complete time evolution.}

\section{Conclusions}\label{conclusion-sec}

The $G$-strand PDEs arise from the Euler-Poincar\'e variational equations for a $G$-invariant Lagrangian, coupled to an auxiliary zero-curvature relation. 
The Hamiltonian formulation of the equations admits a natural phase space which takes values in the dual of a Lie algebra. 
This fact provides an opportunity for a further splitting of the phase space in a way consistent with the Lie-algebraic structures. 

In the present study, we have derived equations whose Hamiltonians depend on the variables from the complementary subspaces of a symmetric space. The examples included finite dimensional $G$-strands, as well as the Diff-strand example for the infinite-dimensional diffeomorphism group. The existence of odd solutions in the $x$ variable, which usually appear as peakon-antipeakon solutions of the Diff-strand equations, was shown to be rooted in the algebraic structure of the phase space, which can be orthogonally decomposed into subspaces of even and odd functions. The invariance of the complementary subspace for odd functions under special Hamiltonian flows arises because the phase space is a symmetric space. Examples in both finite and infinite dimensions considered here suggest topics for further studies. In particular, the finite dimensional integrable $XYZ$ system on the coset space $SO(4)/SO(3)$ in section \ref{symmetric-spaces} deserves further investigation to determine its full solution behaviour. Likewise, in the Diff-strand example in section \ref{diff-strand}, the remarkable reduction of the 2-peakon equations given by \eqref{Gstrand-Lform} to the peakon-antipeakon equation \eqref{redS} should raise additional interesting questions. 

\subsection*{Acknowledgements}
We are grateful for enlightening discussions of this material with F. Gay-Balmaz, T. S. Ratiu and C. Tronci.
We are also grateful for the careful and thoughtful reports of the referees.
Work by RII was supported by a {\it Seed funding} grant support from Dublin Institute of Technology for a project in association with the Environmental Sustainability and Health Institute,  Dublin. Work by DDH was partially supported by Advanced Grant 267382 FCCA from the European Research Council.
AA is supported by an Imperial College London Roth Award.

\bibliographystyle{alpha}
\bibliography{biblio.bib}

\begin{thebibliography}{BGBHR11}

\bibitem[AF87]{AtFo}
C~Athorne and A~Fordy.
\newblock Generalised {KdV} and {MKdV} equations associated with symmetric
  spaces.
\newblock {\em Journal of Physics A: Mathematical and General}, 20(6):1377,
  1987.

\bibitem[ALH15]{ACH2015}
Alexis Arnaudon, Marco~Castrillon Lopez, and Darryl~D Holm.
\newblock Un-reduction in field theory, with applications.
\newblock {\em arXiv preprint arXiv:1509.06919}, 2015.

\bibitem[Arn16]{arnaudon2016deformation}
Alexis Arnaudon.
\newblock {On a Lagrangian Reduction and a Deformation of Completely Integrable
  Systems}.
\newblock {\em Journal of Nonlinear Science}, 26(5):1133--1160, 2016.

\bibitem[Arv03]{arvanitogeorgos2003introduction}
Andreas Arvanitoge{\=o}rgos.
\newblock {\em An introduction to Lie groups and the geometry of homogeneous
  spaces}, volume~22.
\newblock American Mathematical Soc., 2003.

\bibitem[Beg90]{Beggs1990}
Edwin~J. Beggs.
\newblock Solitons in the chiral equation.
\newblock {\em Communications in Mathematical Physics}, 128(1):131--139, 1990.

\bibitem[BEHGB11]{bruveris2011reduction}
M~Bruveris, DCP Ellis, DD~Holm, and F~Gay-Balmaz.
\newblock Un-reduction.
\newblock {\em Journal of Geometric Mechanics}, 3(4):363--387, 2011.

\bibitem[BGBHR11]{BrGBHoRa2011}
M.~Bruveris, F.~Gay-Balmaz, D.~D. Holm, and T.~S. Ratiu.
\newblock The momentum map representation of images.
\newblock {\em Journal of Nonlinear Science}, 21(1):115--150, 2011.

\bibitem[CH93]{Ca-Ho}
Roberto Camassa and Darryl~D. Holm.
\newblock An integrable shallow water equation with peaked solitons.
\newblock {\em Phys. Rev. Lett.}, 71:1661--1664, Sep 1993.

\bibitem[CK02]{constantin2002geometric}
Adrian Constantin and Boris Kolev.
\newblock On the geometric approach to the motion of inertial mechanical
  systems.
\newblock {\em Journal of Physics A: Mathematical and General}, 35(32):R51,
  2002.

\bibitem[CLGPR01]{CaRa}
M.~Castrill{\'o}n~L{\'o}pez, P.~L. Garc{\'i}a~P{\'e}rez, and T.~S. Ratiu.
\newblock {E}uler--{P}oincar{\'e} {R}eduction on {P}rincipal {B}undles.
\newblock {\em Letters in Mathematical Physics}, 58(2):167--180, 2001.

\bibitem[CMR01]{cendra2001lagrangian}
Hern{\'a}n Cendra, Jerrold~E Marsden, and Tudor~S Ratiu.
\newblock {\em Lagrangian reduction by stages}, volume 722.
\newblock American Mathematical Soc., 2001.

\bibitem[Con00]{constantin2000existence}
Adrian Constantin.
\newblock Existence of permanent and breaking waves for a shallow water
  equation: a geometric approach.
\newblock {\em Annales de l'institut Fourier}, 50(2):321--362, 2000.

\bibitem[Cro97]{DC}
Darren~G. Crowdy.
\newblock General solutions to the {2D} {L}iouville equation.
\newblock {\em International Journal of Engineering Science}, 35(2):141 -- 149,
  1997.

\bibitem[FK83]{KuFo}
Allan~P. Fordy and Peter~P. Kulish.
\newblock Nonlinear schr{\"o}dinger equations and simple {L}ie algebras.
\newblock {\em Communications in Mathematical Physics}, 89(3):427--443, 1983.

\bibitem[For84]{Fo}
A~P Fordy.
\newblock Derivative nonlinear {S}chr{\"o}dinger equations and {H}ermitian
  symmetric spaces.
\newblock {\em Journal of Physics A: Mathematical and General}, 17(6):1235,
  1984.

\bibitem[FT07]{FaTa}
Ludwig Faddeev and Leon Takhtajan.
\newblock {\em Hamiltonian methods in the theory of solitons}.
\newblock Springer Science \& Business Media, 2007.

\bibitem[GBHR14]{FDT}
François Gay-Balmaz, Darryl~D Holm, and Tudor~S Ratiu.
\newblock Integrable {$G$}-strands on semisimple {L}ie groups.
\newblock {\em Journal of Physics A: Mathematical and Theoretical},
  47(7):075201, 2014.

\bibitem[GBR09]{FGBRa2009}
Fran{\c{c}}ois Gay-Balmaz and Tudor~S Ratiu.
\newblock The geometric structure of complex fluids.
\newblock {\em Advances in Applied Mathematics}, 42(2):176--275, 2009.

\bibitem[GG10]{GeGr}
Vladimir~S Gerdjikov and Georgi~G Grahovski.
\newblock {Multi-component {NLS} models on symmetric spaces: {S}pectral
  properties versus representations theory}.
\newblock {\em SIGMA}, 6(044):29, 2010.

\bibitem[GGK05]{GGK}
Georgi~G Grahovski, Vladimir~S Gerdjikov, and Nikolay~A Kostov.
\newblock {On the {M}ulti-Component {NLS} {T}ype {E}quations on {S}ymmetric
  {S}paces: {R}eductions and {S}oliton {S}olutions}.
\newblock In {\em Proceedings of the Sixth International Conference on
  Geometry, Integrability and Quantization}, pages 203--217. Institute of
  Biophysics and Biomedical Engineering, Bulgarian Academy of Sciences, 2005.

\bibitem[Gue97]{Gu}
Martin~A Guest.
\newblock {\em Harmonic maps, loop groups, and integrable systems}, volume~38.
\newblock Cambridge University Press, 1997.

\bibitem[GVY08]{GVY}
Vladimir Gerdjikov, Gaetano Vilasi, and Alexandar~Borisov Yanovski.
\newblock {\em Integrable hamiltonian hierarchies: {S}pectral and geometric
  methods}, volume 748.
\newblock Springer Science \& Business Media, 2008.

\bibitem[Hel79]{h}
Sigurdur Helgason.
\newblock {\em {Differential geometry, Lie groups, and symmetric spaces}},
  volume~80.
\newblock Academic press, 1979.

\bibitem[HI13]{Ho-Iv1}
Darryl~D Holm and Rossen~I Ivanov.
\newblock {$G$}-{S}trands and {P}eakon {C}ollisions on {D}iff({$\mathbb R$}).
\newblock {\em SIGMA}, 9(27):14, 2013.

\bibitem[HI14]{Ho-Iv2}
Darryl~D Holm and Rossen~I Ivanov.
\newblock Matrix {$G$}-strands.
\newblock {\em Nonlinearity}, 27(6):1445, 2014.

\bibitem[HIP12]{Ho-Iv-Pe}
Darryl~D. Holm, Rossen~I. Ivanov, and James~R. Percival.
\newblock {$G$}-{S}trands.
\newblock {\em Journal of Nonlinear Science}, 22(4):517--551, 2012.

\bibitem[HK83]{HoKu1983}
Darryl~D. Holm and Boris~A. Kupershmidt.
\newblock Poisson brackets and {C}lebsch representations for
  magnetohydrodynamics, multifluid plasmas, and elasticity.
\newblock {\em Physica D: Nonlinear Phenomena}, 6(3):347 -- 363, 1983.

\bibitem[HL13]{HoLu2013}
Darryl~D Holm and Alexander~M Lucas.
\newblock Toda lattice {$G$}-{S}trands.
\newblock {\em arXiv preprint arXiv:1306.2984}, 2013.

\bibitem[Hol02]{Ho2002}
Darryl~D. Holm.
\newblock {\em Euler-{P}oincar{\'e} {D}ynamics of {P}erfect {C}omplex
  {F}luids}.
\newblock Springer New York, New York, NY, 2002.

\bibitem[Hol11]{Ho2011GM2}
Darryl~D Holm.
\newblock {\em Geometric Mechanics: Rotating, translating and rolling},
  volume~2.
\newblock Imperial College Press, 2011.

\bibitem[HS91]{HS}
John~K. Hunter and Ralph Saxton.
\newblock Dynamics of {D}irector {F}ields.
\newblock {\em SIAM Journal on Applied Mathematics}, 51(6):1498--1521, 1991.

\bibitem[HSAS84]{Harnad1984}
J.~Harnad, Y.~Saint-Aubin, and S.~Shnider.
\newblock B{\"a}cklund transformations for nonlinear sigma models with values
  in riemannian symmetric spaces.
\newblock {\em Communications in Mathematical Physics}, 92(3):329--367, 1984.

\bibitem[HZ94]{HZ}
John~K. Hunter and Yuxi Zheng.
\newblock On a completely integrable nonlinear hyperbolic variational equation.
\newblock {\em Physica D: Nonlinear Phenomena}, 79(2):361 -- 386, 1994.

\bibitem[Ibr95]{Ib}
NH~Ibragimov.
\newblock {\em {CRC} {H}andbook of {L}ie {G}roup to {D}ifferential {E}quations,
  {V}ol. 2}.
\newblock CRC Press, Boca Raton, 1995.

\bibitem[Iva05]{Iv2005}
Rossen Ivanov.
\newblock Conformal properties and {B\"a}cklund transform for the associated
  {C}amassa-{H}olm equation.
\newblock {\em Physics Letters A}, 345(1-3):112 -- 118, 2005.

\bibitem[Kis02]{Ki02}
Arthemy~V. Kiselev.
\newblock {O}n the {G}eometry of {L}iouville {E}quation: {S}ymmetries,
  {C}onservation {L}aws, and {B\"a}cklund {T}ransformations.
\newblock {\em Acta Applicandae Mathematica}, 72(1):33--49, 2002.

\bibitem[LM03]{CaMa2003}
Marco~Castrill{\'o}n L{\'o}pez and Jerrold~E. Marsden.
\newblock Some remarks on {L}agrangian and {P}oisson reduction for field
  theories.
\newblock {\em Journal of Geometry and Physics}, 48(1):52 -- 83, 2003.

\bibitem[LM05]{MiLo}
S.~Lombardo and A.V. Mikhailov.
\newblock Reduction groups and automorphic lie algebras.
\newblock {\em Communications in Mathematical Physics}, 258(1):179--202, 2005.

\bibitem[Mar13]{Marle}
Charles-Michel Marle.
\newblock {About Henri Poincar{\'e}'s note ''Sur une forme nouvelle des
  {\'e}quations de la M{\'e}canique''}.
\newblock {\em {Journal of geometry and symmetry in physics}}, 29:1--38, 2013.

\bibitem[Mik81]{Mi}
Alexander~V. Mikhailov.
\newblock The reduction problem and the inverse scattering method.
\newblock {\em Physica D: Nonlinear Phenomena}, 3(1):73 -- 117, 1981.

\bibitem[MR99]{MaRa1999}
Jerrold~E. Marsden and Tudor~S. Ratiu.
\newblock {\em Introduction to mechanics and symmetry}, volume~17 of {\em Texts
  in Applied Mathematics}.
\newblock Springer-Verlag, New York, second edition, 1999.
\newblock A basic exposition of classical mechanical systems.

\bibitem[NMPZ84]{ZMNP}
S~Novikov, SV~Manakov, LP~Pitaevskii, and VE~Zakharov.
\newblock Theory of solitons: The inverse scattering method.
\newblock {\em New York and London, Consultants Bureau, 1984, 286 p.
  Translation.}, 1, 1984.

\bibitem[Poi01]{Po1901}
Henri Poincar{\'e}.
\newblock Sur une forme nouvelle des {\'e}quations de la m{\'e}canique.
\newblock {\em CR Acad. Sci}, 132:369--371, 1901.

\bibitem[Sev96]{Se96}
Alexey Sevostyanov.
\newblock The classical {$R$} matrix method for the nonlinear {S}igma model.
\newblock {\em International Journal of Modern Physics A}, 11(23):4241--4254,
  1996.

\bibitem[STSS97]{SeSe}
Michael Semenov-Tian-Shansky and Alexey Sevostyanov.
\newblock {\em Classical and Quantum Nonultralocal Systems on the Lattice}.
\newblock Birkh{\"a}user Boston, Boston, MA, 1997.

\bibitem[Viz15a]{Vi1}
Cornelia Vizman.
\newblock Invariant variational problems on homogeneous spaces.
\newblock {\em Journal of Geometry and Physics}, 90:104 -- 110, 2015.

\bibitem[Viz15b]{Vi2}
Cornelia Vizman.
\newblock Lagrangian {R}eduction on {H}omogeneous {S}paces with {A}dvected
  {P}arameters.
\newblock {\em Symmetry, Integrability and Geometry: Methods and Applications},
  11(0):9--10, 2015.

\bibitem[ZM78]{ZaMi}
V.E. Zakharov and A.V. Mikhailov.
\newblock Relativistically invariant two-dimensional models of field theory
  which are integrable by means of the inverse scattering problem method.
\newblock {\em Sov. Phys. - JETP (Engl. Transl.); (United States)}, 47(6):1017,
  Jun 1978.

\end{thebibliography}

\end{document}